\documentclass[a4paper,USenglish,cleveref,colorlinks=true,urlcolor=black,linkcolor=black,citecolor=black,thm-restate,authorcolumns]{lipics-v2021}

\nolinenumbers
\hideLIPIcs

\usepackage{amsfonts}
\usepackage{subcaption}
\usepackage[table]{xcolor}
\usepackage{csquotes}
\usepackage{booktabs}
\usepackage{comment}
\usepackage{microtype}
\usepackage{mathtools}
\usepackage{bm}

 \usepackage{thm-restate}
 \usepackage{amssymb}
\PassOptionsToPackage{capitalise}{cleveref}
 \crefname{section}{Section}{Sections}
\PassOptionsToPackage{colorlinks=true,urlcolor=black,linkcolor=black,citecolor=black}{hyperref}

\newcommand{\prooflink}[1]{\hypersetup{linkcolor=lipicsBulletGray}\hyperref[#1]{$\blacktriangledown$}}
\newcommand{\statlink}[1]{\hypersetup{linkcolor=lipicsYellow}\hyperref[#1]{$\blacktriangle$}}
\newcommand{\mylink}{}
\newcommand{\cprooflink}[1]{\hypersetup{linkcolor=lipicsBulletGray}\hyperref[#1]{$\triangledown$}}
\newcommand{\cstatlink}[1]{\hypersetup{linkcolor=lipicsYellow}\hyperref[#1]{$\vartriangle$}}
\newcommand{\cmylink}{}

\theoremstyle{claimstyle}

\newcommand{\definitionBox}[1]{\smallskip
\noindent\fbox{\begin{minipage}{0.98\textwidth}
\smallskip

#1

\smallskip
\end{minipage}}

\smallskip}

\def\defn#1{\textit{\textbf{\boldmath #1}}}
\renewcommand{\emph}[1]{\defn{#1}}

\title{Rerouting Curves on Surfaces}

\author{Timo Brand}{Technische Universität München, DE}{timo.brand@tum.de}{https://orcid.org/0009-0004-3111-2045}{%
}
\author{Stefan Felsner}{Technische Universität Berlin, DE}{felsner@math.tu-berlin.de}{https://orcid.org/0000-0002-6150-1998}{%
}
\author{Henry Förster}{John Cabot University, IT \and Technische Universität München, DE \and Universität Tübingen, DE}{henry.foerster@johncabot.edu}{https://orcid.org/0000-0002-1441-4189}{%
}
\author{Stephen Kobourov}{University of Arizona, US \and Technische Universität München, DE}{stephen.kobourov@tum.de}{https://orcid.org/0000-0002-0477-2724}{%
}
\author{Anna Lubiw}{University of Waterloo, CA}{alubiw@uwaterloo.ca}{https://orcid.org/0000-0002-2338-361X}{%
}
\author{Yoshio Okamoto}{The University of Electro-Communications, JP}{okamotoy@uec.ac.jp}{https://orcid.org/0000-0002-9826-7074}{Partially supported by JSPS KAKENHI Grant Numbers JP23K10982, JP26K23806, and JST ERATO Grant Number JPMJER2301.
}
\author{János Pach}{Rényi Institute of Mathematics, HU \and EPFL, CH}{pach@cims.nyu.edu}{https://orcid.org/0000-0002-2389-2035}{%
}
\author{Csaba D. Tóth}{Cal State Northridge, Los Angeles, CA, US \and Tufts University, Medford, MA, US}{csaba.toth@csun.edu}{https://orcid.org/0000-0002-8769-3190}{Research supported in part by the NSF award DMS-2154347.}
\author{G\'eza Tóth}{Rényi Institute of Mathematics, HU}{geza@renyi.hu}{https://orcid.org/0000-0003-1751-6911}{%
}
\author{Torsten Ueckerdt}{Karlsruhe Institute of Technology, DE}{torsten.ueckerdt@kit.edu}{https://orcid.org/0000-0002-0645-9715}{Funded by the Deutsche Forschungsgemeinschaft (DFG, German Research Foundation) – 520708409}
\author{Pavel Valtr}{Charles University Prague, CZ}{valtr@kam.mff.cuni.cz}{https://orcid.org/0000-0002-3102-4166}{%
}

\authorrunning{Brand, Felsner, Förster, Kobourov, Lubiw, Okamoto, Pach, Tóth, Tóth, Ueckerdt, Valtr}

\Copyright{Timo Brand, Stefan Felsner, Henry Förster, Stephen Kobourov, Anna Lubiw, Yoshio Okamoto, János Pach, Csaba D.\ Tóth, Géza Tóth, Torsten Ueckerdt, and Pavel Valtr}

\ccsdesc[500]{Mathematics of computing~Geometric topology}
\ccsdesc[500]{Mathematics of computing~Graph algorithms}
\ccsdesc[500]{Mathematics of computing~Trees}
\ccsdesc[100]{Mathematics of computing~Combinatorial algorithms}

\keywords{Combinatorial reconfiguration, orientable surface, non-orientable surface, rerouting}

\acknowledgements{This work started at Dagstuhl Seminar 24062
``Beyond-Planar Graphs: Models, Structures and Geometric Representations''.}

\EventEditors{Philip Bille, Seth Pettie, and Sabine Storandt}
\EventNoEds{3}
\EventLongTitle{34th Annual European Symposium on Algorithms (ESA 2026)}
\EventShortTitle{ESA 2026}
\EventAcronym{ESA}
\EventYear{2026}
\EventDate{August 31--September 4, 2026}
\EventLocation{L'Aquila, Italy}
\EventLogo{}
\SeriesVolume{388}
\ArticleNo{26}
\begin{document}
\maketitle

\begin{abstract}
    We study the problem of reconfiguring a crossing-free embedding of a graph on a surface, with edges represented as curves, into another crossing-free embedding of the same graph on the same surface with the same fixed vertex positions.
    In this process, we reroute one edge at a time while maintaining crossing-free intermediate embeddings.
    This problem was introduced by Ito et al. [TALG 2025], who showed that even if the graph is a matching of two edges, reconfiguration is not always possible in the plane, but is always possible on the torus.
    For matchings of two or more edges, they gave a necessary and sufficient condition for reconfigurable embeddings in the plane, but not on the torus.
    Our main result is that for matchings, trees and forests, reconfiguration is always possible on the torus, and consequently, on any orientable surface of genus at least one.
    In addition, we provide sufficient conditions for reconfiguration on orientable surfaces of genus at least one and in the projective plane.
    For more general graphs, we show that reconfiguration is not always possible.
\end{abstract}

\begin{figure}[htb!]
    \centering
    \begin{subfigure}{0.2\textwidth}
        \centering
        \includegraphics[scale=0.85,page=2]{intro1.pdf}
        \subcaption{}
        \label{fig:intro:1}
    \end{subfigure}
    \hfil\begin{subfigure}{0.75\textwidth}
        \centering
        \includegraphics[scale=0.85,page=3]{intro1.pdf}
        \subcaption{}
        \label{fig:intro:2}
    \end{subfigure}

    \begin{subfigure}{\textwidth}
        \centering
        \includegraphics[scale=0.85,page=4]{intro1.pdf}
        \subcaption{}
        \label{fig:intro:3}
    \end{subfigure}
    \caption{(a)~Two embeddings (red \& blue) that cannot be reconfigured on the plane.
        (b)~Reconfiguration on the fundamental square.
        (c)~Corresponding illustration on the torus.}
    \label{fig:intro}
\end{figure}

\section{Introduction}
\label{sec:intro}
We study the problem of reconfiguring graph embeddings on a surface, where the vertices are fixed and each reconfiguration step redraws one edge (represented as a curve).
Consider a set~$P$ of points on a surface $\Sigma$, and two embeddings $\mathcal{B}$ and $\mathcal{R}$ of the same simple graph $G$
that share the same mapping of vertices to $P$.
Here, an \defn{embedding} means that each edge is drawn as a simple curve, which we call an \emph{edge curve}, on the surface, and no two edge curves intersect except at a common endpoint.
The edge curves of $\mathcal{B}$ may cross the edge curves of $\mathcal{R}$ and the correspondence between edge curves of $\mathcal{B}$ and $\mathcal{R}$ is implied by the identical mapping of vertices.
A \emph{reconfiguration step} or \emph{move} replaces one edge curve~$\gamma$ of an embedded graph $G$ by a new curve $\gamma'$ to obtain a new embedding of $G$; i.e., $\gamma'$ may not cross any of the other edge curves in the embedding, though we allow $\gamma$ and $\gamma'$ to intersect.
We address the question  whether $\mathcal{B}$ can be reconfigured to~$\mathcal{R}$ via a sequence of moves.

Ito et al.~\cite{ItoIK0MNOO25} showed
that reconfiguration in the plane is not always possible, even if the graph is a matching of two
edges; see \cref{fig:intro:1}.
On the other hand, they proved that reconfiguration of a matching of two edges is always possible on a surface $\Sigma$ of genus $g \ge 1$;
see \cref{fig:intro:2,fig:intro:3} for
an example on the torus.
In this paper, we investigate the natural open problem of determining
which classes of graphs can be reconfigured on surfaces of higher genus.
In particular, we investigate general forests (including larger matchings).

\subparagraph*{Related Work.}
Rerouting curves on surfaces was introduced by Ito et al.\ \cite{ItoIK0MNOO25}.
They studied the case where the graph is a matching (a set of independent edges) and the surface is the plane (equivalently the sphere), for which they gave an algebraic condition characterizing the reconfigurable embeddings.
This was a main step for studying the reconfigurability of vertex-disjoint paths in planar graphs.
When the graph is a matching of two edges on a torus, they showed that reconfiguration is always possible, and left open the case of matchings of three or more edges.
The problem of reconfiguring graph embeddings on a surface using elementary moves (studied in this paper) has obvious similarities to other problems in graph drawing and computational topology.
We highlight key differences that may explain why the techniques developed for other problems do not seem to be helpful for our problem.
The problem of \emph{morphing graph drawings on a surface}~\cite{alamdari2017morph,chambers2021morph,ericksonplanar} is different in that the vertices are allowed to move, and usually the edges must remain straight segments on the surface.
The problem of \emph{tightening} or \emph{untangling curves on a surface}~\cite{DBLP:journals/talg/ChangM22,DBLP:journals/dcg/ChangE17,DBLP:conf/soda/Chang0LMSSTT18,de2024untangling} is also different in that they consider drawings with possible crossings (i.e., immersions rather than embeddings), and deform the edge curves continuously via so-called homotopy moves (local moves that modify the topology of the immersion).
We also point the interested reader to Colin de Verdi\`ere's
survey~\cite{de2017computational} on graphs on surfaces.
Finally, our problem is a type of \emph{combinatorial reconfiguration} problem; see~\cite{DBLP:journals/algorithms/Nishimura18,DBLP:books/cu/p/Heuvel13} for introductions to this field.
Note that a reconfiguration of circle arrangements on orientable surfaces with moves that replace one circle with a transversely intersecting circle has been considered in~\cite{hatcher1980presentation,lackenby2024bounds}.

An overview of our results and open problems is presented in \cref{tab:results}.
Due to space constraints, some (details of) proofs are provided in the appendix only; their statements are marked with \textcolor{yellow}{$\blacktriangle$}\textcolor{darkgray}{$\blacktriangledown$}.
In the PDF, \textcolor{yellow}{$\blacktriangle$} links to the statement and \textcolor{darkgray}{$\blacktriangledown$} links to the proof.

\begin{table}[t]
    \caption{Our reconfiguration results and our open problems (numbered as in the list in \cref{sec:open}).}
    \label{tab:results}
    \centering
    \sffamily \setlength{\tabcolsep}{6pt}
    \renewcommand{\arraystretch}{1.5}
    \small
    \begin{tabular}{llcc}
          \rowcolor{lipicsBulletGray!30!black}\multicolumn{2}{l}{\bfseries \textcolor{lipicsYellow!5}{GRAPH CLASS}} & \multicolumn{1}{c}{\bfseries \textcolor{lipicsYellow!5}{ORIENTABLE}}          & \multicolumn{1}{c}{\bfseries \textcolor{lipicsYellow!5}{NON-ORIENTABLE}}                                                                                      \\
        {\cellcolor{lipicsBulletGray!35}\bfseries Forests}                       &\cellcolor{lipicsBulletGray!35}\textit{-- General Case}                          &  \cellcolor{lipicsBulletGray!35}                                                                            &         \cellcolor{lipicsBulletGray!25}                                           \\\cellcolor{lipicsBulletGray!35}
                                                  & \cellcolor{lipicsBulletGray!35}\textit{-- Perfect Matchings (arbitrary)}         &      \cellcolor{lipicsBulletGray!35}                                                                        & \cellcolor{lipicsBulletGray!25}\multirow{-2}{*}{Open Problem~\ref{openproblem:1}} \\ \cellcolor{lipicsBulletGray!35}
        &\cellcolor{lipicsBulletGray!35} \textit{-- Perfect Matchings (inside disk)}       & \multirow{-3}{*}{\cellcolor{lipicsBulletGray!35}\cref{thm:tree}}                                            & \cellcolor{lipicsBulletGray!35}\cref{thm:projective}                              \\
        \cellcolor{lipicsYellow!20}{\bfseries Planar graphs} \phantom{AH}    &\cellcolor{lipicsYellow!20}\textit{-- General Case}                          & \multicolumn{2}{c}{\cellcolor{lipicsYellow!5}Open Problems~\ref{openproblem:2} \&~\ref{openproblem:3}}                                                      \\ \cellcolor{lipicsYellow!20}
                                                  & \cellcolor{lipicsYellow!20}\textit{-- with Fixed Rotation System}            & \cellcolor{lipicsYellow!20}\cref{thm:embedding-preserving}                                              &    \cellcolor{lipicsYellow!5}                                                \\ \cellcolor{lipicsYellow!20}
                                 &\cellcolor{lipicsYellow!20}\textit{-- Series-parallel graphs}                &\cellcolor{lipicsYellow!20} \cref{thm:seriesparallel}                                                    & \cellcolor{lipicsYellow!5} \multirow{-2}{*}{Open Problem~\ref{openproblem:3}} \\
        \cellcolor{lipicsBulletGray!35}{\bfseries Negative examples}

                                                  & \cellcolor{lipicsBulletGray!35}\textit{-- Same rotation system}                  &    \cellcolor{lipicsBulletGray!35}             & \cellcolor{lipicsBulletGray!35}                                                            \\
                                  \cellcolor{lipicsBulletGray!35}                & \cellcolor{lipicsBulletGray!35}\textit{-- Different rotation systems,}           &            \multicolumn{2}{c}{\cellcolor{lipicsBulletGray!35}{\cref{thm:negative}}}                                                                                                                          \\\cellcolor{lipicsBulletGray!35}
                                 \cellcolor{lipicsBulletGray!35}                 &\cellcolor{lipicsBulletGray!35}\textit{\phantom{--} all implementing embeddings}           &\cellcolor{lipicsBulletGray!35}                                         &\cellcolor{lipicsBulletGray!35}                                                                                  \\
        \end{tabular}
\end{table}

\section{Preliminaries}
\label{sec:pre}

We study the reconfiguration of graphs embedded on a surface $\Sigma$ where edges are represented as arcs.
A curve on a surface $\Sigma$ is a continuous function $\gamma:[0,1]\to\Sigma$.
It is \emph{closed} if $\gamma(0)=\gamma(1)$ and \emph{simple} if $\gamma$ is injective on $[0,1)$.
An \emph{arc} on $\Sigma$ is the image of a simple curve $\gamma:[0,1]\to\Sigma$ such that $\gamma(0)\neq\gamma(1)$.
The points $\gamma(0)$ and $\gamma(1)$ are called the endpoints of the arc.
We will always consider simple curves unless mentioned otherwise.

Let $G=(V,E)$ be a simple graph with $n$ vertices
$V=\{v_1,\ldots,v_n\}$ and with $m$ edges $E\subseteq \binom{V}{2}$.
Let $\Sigma$ be a surface and let $P=\{p_1,\ldots,p_n\} \subset \Sigma$ be a set of $n$ distinct points on $\Sigma$.
A mapping $\mathcal{E}$ defined on $V\cup E$ is a \emph{$\Sigma$-embedding} of~$G$~on~$P$~if
\textbf{1.} $\mathcal{E}(v_i) = p_i$ for $i \in \{1,\dots,n\}$,
\textbf{2.} $\mathcal{E}(e)$ is a simple open arc on $\Sigma$ with endpoints $\mathcal{E}(u)$ and $\mathcal{E}(v)$, called an \defn{edge curve}, for $e = uv \in E$,
\textbf{3.} $\mathcal{E}(e) \cap \mathcal{E}(f) \subseteq \mathcal{E}(V)$ for $e,f\in E$, $e\neq f$.
For simplicity, we use $v_i$ and  $p_i$ interchangeably and denote by $\mathcal{E}$  both the mapping and its image on $\Sigma$.
Let~$G'$ be a proper subgraph of a graph $G$.
A $\Sigma$-embedding of $G'$ is called a \defn{partial} $\Sigma$-embedding of $G$; that is, it describes an embedding of only a subset of the edges of $G$. Given a $\Sigma$-embedding $\mathcal{E}$ of $G$, we denote by $\mathcal{E}[G']$ the restriction $\mathcal{E}$ to the subgraph $G'$, i.e.,  $\mathcal{E}[G']$ contains exactly the edge curves of $\mathcal{E}$ corresponding to edges in $G'$. The \emph{rotation system} of an embedding $\mathcal{E}$ specifies the clockwise cyclic order of edges incident to each vertex.

We investigate embeddings of a graph $G$ where the mapping of vertices to points is fixed,
that is, we consider $\Sigma$-embeddings $\mathcal{E}$ and $\mathcal{F}$, where $\mathcal{E}(v_i)=\mathcal{F}(v_i)=p_i$.
Two such embeddings are \emph{adjacent} if there is exactly one edge $e^* \in E$ such that $\mathcal{E}(e^*) \neq \mathcal{F}(e^*)$.
Embedding $\mathcal{E}$ is \emph{reconfigurable} to $\mathcal{F}$ if there is a \emph{reconfiguration sequence}
$(\mathcal{E}_0 = \mathcal{E},\mathcal{E}_1,\ldots,\mathcal{E}_k = \mathcal{F})$ from $\mathcal{E}$ to $\mathcal{F}$ where each $\mathcal{E}_i$ is a $\Sigma$-embedding of $G$ on $P$ for $i \in \{0,\dots,k\}$,
and $\mathcal{E}_i$ is adjacent to  $\mathcal{E}_{i+1}$ for $i \in \{0,\dots,k-1\}$.
We study the following \texttt{$\Sigma$-Embedding Reconfiguration} problem:

\definitionBox{\hspace{5pt}\texttt{$\Sigma$-Embedding Reconfiguration}

    \hspace{5pt}\textbf{Input.} A graph $G=(V,E)$ and two embeddings $\mathcal{B}$ and $\mathcal{R}$ of $G$ on surface $\Sigma$.

    \hspace{5pt}\textbf{Question.} Is $\mathcal{B}$ reconfigurable to $\mathcal{R}$?}

Only edges are reconfigured and relevant for crossings; hence assume that $G$ has~no isolated vertices.
Ito et al.~\cite{ItoIK0MNOO25} solved the
problem when $G$ is a matching and $\Sigma$ is the plane by giving necessary and sufficient conditions for reconfiguration.
Their characterization yields a polynomial-time algorithm
in special cases.
Their example of two disjoint edges (Figure~\ref{fig:intro}) is the smallest example where reconfiguration is not possible in the plane.
However, they show how to reconfigure embeddings of this 2-edge graph on any surface of higher genus.

We denote by $\Sigma$ a compact surface without boundary.
Such a surface $\Sigma$ is an \emph{orientable surface of genus $0$} if it is homeomorphic to the sphere.
An \emph{orientable surface of genus $g$} is obtained from a surface $\Sigma'$ of genus $g-1$ by removing two disjoint open disks from  $\Sigma'$,
and identifying their boundaries with opposite orientations.
The {orientable surface of genus $1$} is the \emph{torus}.
Embeddings on the sphere and torus are called
\emph{plane} and \emph{toroidal}, respectively.
There are also \emph{non-orientable surfaces}.
The \emph{non-orientable surface of genus $1$} is the projective plane $\mathbb{R}P^2$.
It is obtained from a sphere $\mathbb{S}^2$ by removing the interior of a disk $D$ and identifying antipodal points on the boundary $\partial D$; this operation is called \emph{attaching a crosscap}.
We refer to the image of $\partial D$ in the quotient as the \emph{crosscap}.
We say that an $\mathbb{R}P^2$-embedding of a graph $G$ has a \emph{clean crosscap} if the embedding of every edge of $G$ is disjoint from the crosscap.
For $g \ge 2$, the \emph{non-orientable surface of genus $g$} is obtained from the non-orientable surface of genus $g-1$ by attaching a crosscap.

A closed curve $\gamma$ on a surface $\Sigma$ is \emph{separating} if $\Sigma \setminus \gamma$ is disconnected;
otherwise~$\gamma$ is \emph{non-separating}.
The curve $\gamma$ is \emph{contractible} if it is homotopic to a point.
A \emph{disk}~on~a surface~is the region bounded by a contractible closed curve $\gamma$.
On the sphere, each closed curve is contractible and separating;
on the torus, all separating closed curves are contractible~\cite{viro2024elementary}.

The \emph{fundamental square} $\bar{\Sigma}$
is the following quotient-space representation of the torus $\Sigma=\mathbb{S}^1\times \mathbb{S}^1$: it is the quotient of a unit square $[0,1]^2$, where
\textbf{1.} the left and right sides are identified such that $(0,y)$ is identified with $(1,y)$ for all $y\in [0,1]$, and
\textbf{2.} the top and bottom sides are identified such that $(x,0)$ is identified with $(x,1)$ for all $x\in [0,1]$.
We also call the boundary (interior) of the unit square the \emph{boundary} (\emph{interior}) of the fundamental square, and denote it $\partial \bar{\Sigma} =\partial [0,1]^2$ (resp., ${\rm int}\,\bar{\Sigma} = {\rm int}\,[0,1]^2$).
Note that the boundary (interior) of the fundamental square is not the boundary (interior) in a topological sense; instead it is associated with the particular representation of the torus as a fundamental square.
If $\mathcal{B}$ is an embedding on the torus $\Sigma$, then its image under this representation is called a \emph{$\bar{\Sigma}$-embedding}.
More generally, if $\Sigma$ is an orientable surface of genus $g\geq 1$, then $\Sigma$ is homeomorphic to the quotient of a $4g$-gon, called a \emph{fundamental polygon}, whose sides are identified according to the standard scheme
$a_1b_1a_1^{-1}b_1^{-1}\cdots a_gb_ga_g^{-1}b_g^{-1}$.
For $g=1$, this is exactly the square model of the torus.
If no edge curve of $\mathcal{B}$ intersects $\partial\bar{\Sigma}$, equivalently if $\mathcal{B}$ is contained in ${\rm int}\,\bar{\Sigma}$, then we call $\mathcal{B}$ an \emph{interior $\bar{\Sigma}$-embedding}.
In this case, $\mathcal{B}$ is contained in a topological disk and can be viewed as a planar embedding inside the chosen fundamental polygon.

\section{Reconfiguring Embeddings of a Forest on the Torus}
\label{sec:torus}

We describe an algorithm that constructs a reconfiguration sequence from a given source $\Sigma$-embedding $\mathcal{B}$ of a forest $F=(V,E)$ to a given target $\Sigma$-embedding $\mathcal{R}$ of $F$ where $\Sigma$ is an orientable surface of genus $g \geq 1$.
For now, let  $\Sigma$ be a torus; later, we generalize to $g>1$.

To formalize our result algorithmically and analyze the performance of our algorithm, we consider a \emph{geometric setting} with the following input specifications:
\begin{enumerate}[1.]
    \item $\mathcal{B}$ and $\mathcal{R}$ are given as $\bar{\Sigma}$-embeddings of $\Sigma$,
    \item no vertex is on the boundary $\partial\bar\Sigma$,
    \item each edge curve is a polyline of straight-line segments, and
    \item each segment crosses $\partial \bar{\Sigma}$ at most once.
\end{enumerate}
For an embedding $\mathcal E$, let $s({\mathcal E})$ be its number of segments, and let $k({\mathcal E})$
be the number of these segments that cross $\partial \bar{\Sigma}$.
The length of the reconfiguration sequence and the runtime will depend on
the parameters $s$ and $k$  of $\mathcal B$ and $\mathcal R$.
Note that although $\cal B$ and $\cal R$ (blue and red in all figures) are each crossing-free $\bar{\Sigma}$-embeddings on ${\Sigma}$, $\cal B$ and $\cal R$ may cross each other.
Our  main result is:

\begin{restatable}{rtheorem}{thmforestontorus}
    \label{thm:tree}
    \renewcommand{\mylink}{\statlink{thm:tree}\prooflink{Pthmforestontorus}}
    Let $F = (V,E)$ be a forest having $c=c(F)$ connected components. Let $\mathcal{B}$ and $\mathcal{R}$ be two
    $\bar{\Sigma}$-embeddings of $F$ on the fundamental square ${\bar\Sigma}$.
    Suppose that the edges of $F$ are represented by polylines in  $\mathcal{B}$ and $\mathcal{R}$ and let $s = s(\mathcal{B}) + s(\mathcal{R}) $ be the total number of segments in~$\mathcal{B}$ and~$\mathcal{R}$ and let $k = k(\mathcal{B}) + k(\mathcal{R})$ be the total number of segments in~$\mathcal{B}$ and~$\mathcal{R}$  that intersect the boundary~$\partial\bar\Sigma$ of~$\bar\Sigma$.
    Then there is a reconfiguration sequence from~$\mathcal{B}$ to~$\mathcal{R}$ of length~$O(c 3^k s^2)$.
    Each embedded graph in the sequence has~$O(c^2 3^{2k} s^3)$ segments.
    There is an algorithm to compute the reconfiguration sequence in time proportional to the output size, where the output size is the total number of segments in an explicit list of all embeddings in the sequence.
\end{restatable}

\noindent Our algorithm proving \cref{thm:tree} consists of the following two phases.
\begin{itemize}
    \item \textsf{\bfseries \textcolor{lipicsLineGray}{Phase~1:  Reconfiguring $\cal B$ and $\cal R$ to $\mathcal{B}^*$ and  $\mathcal{R}^*$}:} Separately reconfigure $\cal B$ and $\cal R$ to embeddings $\mathcal{B}^*$ and  $\mathcal{R}^*$, respectively, such that no edge curve in $\mathcal{B}^*$ or  $\mathcal{R}^*$ crosses the boundary  $\partial \bar{\Sigma}$. (We completely ignore $\cal R$ when reconfiguring $\cal B$, and vice-versa.)

    \item \textsf{\bfseries \textcolor{lipicsLineGray}{Phase~2:  Reconfiguring $\mathcal{B}^*$ to  $\mathcal{R}^*$:}}
          Reconfigure
          $\mathcal{B}^*$ to  $\mathcal{R}^*$ using the freedom to temporarily reconfigure edges to wrap around the boundary of $\bar\Sigma$.
\end{itemize}

We obtain the final sequence by
first reconfiguring $\mathcal{B}$ to $\mathcal{B}^*$ (\textsf{\bfseries \textcolor{lipicsLineGray}{Phase~1}}), then $\mathcal{B}^*$ to $\mathcal{R}^*$ (\textsf{\bfseries \textcolor{lipicsLineGray}{Phase~2}}), and finally $\mathcal{R}^*$ to $\mathcal{R}$ (\textsf{\bfseries \textcolor{lipicsLineGray}{Phase 1}}; reconfiguration sequences are reversible).
For \textsf{\bfseries \textcolor{lipicsLineGray}{Phase~1}}, it suffices to show how to reconfigure $\mathcal{B}$ to $\mathcal{B}^*$, ignoring $\mathcal{R}$ ($\mathcal{R}$ to $\mathcal{R}^*$ is symmetric).

\paragraph*{Phase 1. Reconfigure $\mathcal{B}$ to a $\bar\Sigma$-embedding $\mathcal{B}^*$  that does not intersect the boundary}

\begin{restatable}{rlemma}{lemmakingpplane}
    \label{lem:making-p-plane}
    \renewcommand{\mylink}{\statlink{lem:making-p-plane}\prooflink{Plemmakingpplane}}
    Let $\mathcal B$ be a $\bar{\Sigma}$-embedding of a forest $F=(V,E)$ on ${\Sigma}$.
    Let $s = s({\mathcal B})$ and $k = k({\mathcal B})$.
    Then there is a reconfiguration sequence of length $O(3^k)$ from $\mathcal{B}$ to a $\bar{\Sigma}$-embedding $\mathcal{B}^*$ of $F$ that does not cross $\partial \bar{\Sigma}$.
    Each embedded graph in the sequence has $O(3^k s)$ segments.
    There is an algorithm that computes the reconfiguration sequence in time $O(3^{2k} s)$.
\end{restatable}

\begin{proof}

    We first prove the existence of $\mathcal{B}^*$ by induction and then prove the bounds and show how the proof yields an algorithm with the claimed runtime.
    Inspired by the machinery of weak embeddings by Chang, Erickson, and Xu~\cite{ChangEX15}, we use a sequence of embedded trees, ${\mathcal F}_0,{\mathcal F}_1,\dots,{\mathcal F}_k$, that we call the \emph{frame trees}.
    Each frame tree ${\mathcal F}_j$ is a Steiner tree: its vertices include some endpoints of segments of $\mathcal{B}$ and $2k$ additional points
    in $\bar{\Sigma}\setminus \partial\bar{\Sigma}$, at a fixed (small) distance $\delta>0$ from $\partial\bar{\Sigma}$.
    Throughout the reconfiguration sequence, the rerouted edge curves of $\mathcal B$ stay in a small neighborhood of the edges of one of~${\mathcal F}_0,\ldots,{\mathcal F}_k$.
    At the end of the reconfiguration process, they lie in a small neighborhood of the edges of the last frame tree ${\mathcal F}_k$, which the construction below shows does not cross $\partial\bar\Sigma$.
    Consequently, the rerouted edge curves in their final position also do not cross $\partial\bar\Sigma$.
    An edge curve of $\mathcal B$ may spiral around ${\mathcal F}_j$ multiple times.

    \begin{figure}[t!]
        \centering
        \includegraphics[scale=1]{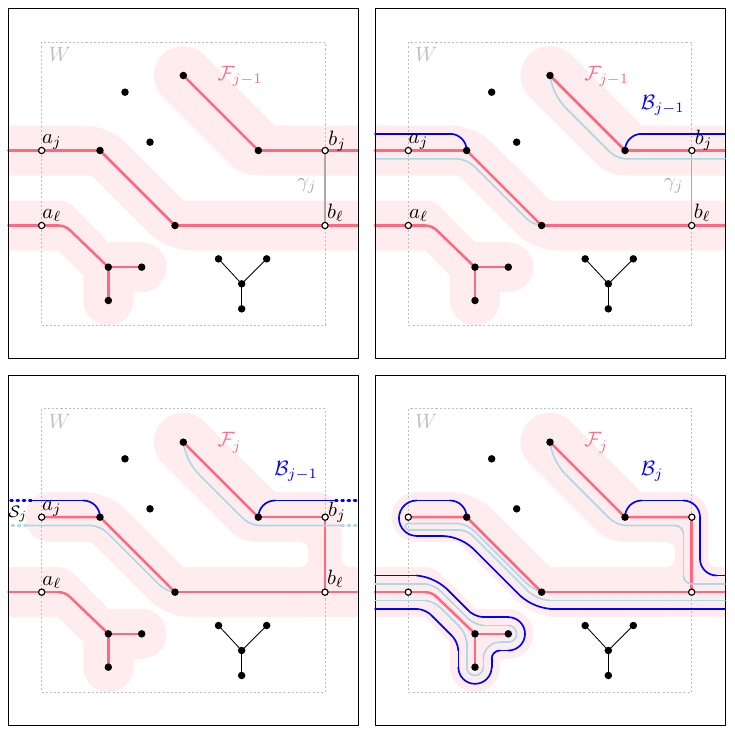}
        \caption{Phase 1:
            \textit{Top-Left:} The smaller square $W$ 
            intersected by the segments crossing $\partial \bar{\Sigma}$ 
            is shown dashed in gray, and frame trees $\mathcal{F}_{j-1}$ and $\mathcal{F}_{j}$  in pink, with their $\varepsilon$-neighborhoods in a lighter color.
            The frame tree $\mathcal{F}_{j-1}$ is cut by deleting the segment $f_j = a_j b_j$
            (eliminating one crossing with $\partial\bar\Sigma$) and then reconnected by adding the candidate curve $\gamma_j$ shown in gray.
            \textit{Top-Right:} In the embedding $\mathcal{B}_{j-1}$, edges may be routed in the $\varepsilon$-neighborhood of the segment $f_j$ (see the cyan and blue edge for two examples). 
            \textit{Bottom-Left:} After updating the frame tree to $F_{j}$, the segments of such edges in the $\varepsilon$-neighborhood of $f_j$ (drawn dashed) must be reconfigured.
            \textit{Bottom-Right:}~The two edges that originally crossed $W$ in the embedding $\mathcal{B}_{j-1}$
            (cyan and blue) are reconfigured, first cyan, then blue, to obtain the embedding $\mathcal{B}_{j}$.\\
            Note that the polylines representing the edges in the embeddings cannot cross themselves and each other, but they can cross the frame tree; in the figure, insertion of the candidate curve $\gamma_j$ to the frame tree is responsible for the intersection of the frame tree with the cyan polyline near $b_j$.
            Moreover, observe that the parts of $\mathcal{F}$ not intersecting $W$ are not in the neighborhood of $\mathcal{F}_{j-1}$ as they are not part of $\mathcal{B}_0$.
        }
        \label{fig:phase1}
    \end{figure}

    If $k({\mathcal B})=0$, then the lemma holds with ${\mathcal B}^*$ set to $\mathcal B$.
    In the following, we assume that $k>0$.
    We fix $\delta>0$ small enough such that all vertices of $\mathcal B$ (including the isolated vertices) lie at a distance larger than $\delta$ from $\partial\bar\Sigma$.
    Thus, all vertices of $\mathcal B$ lie inside the square $W\coloneqq [\delta,1-\delta]^2$.
    Let $\mathcal B_{0}\subseteq \mathcal B$ be the embedded forest obtained from $\mathcal B$ by deleting all connected components that do not cross $\partial\bar\Sigma$.
    Observe that the region $\bar\Sigma\setminus W$ is intersected only by those $k=k({\mathcal B})=k({\mathcal B}_{0})$ segments of $\mathcal B_{0}$ that cross $\partial\bar\Sigma$. Let $s_1,\dots,s_k$ denote these $k$ segments, in arbitrary order.
    Recall that we are in the geometric setting, where, in particular, we assume that in the input embedding each segment crosses the boundary at most once.
    Our goal is to reroute the segments $s_1,\dots,s_k$ to new polylines (one per segment), called their \emph{polyline representations},
    which will all lie in the union of small $\varepsilon$-neighborhoods%
    \footnote{In a slight abuse of notation, we will denote by $\varepsilon$ a suitably small distance without explicitly specifying its value.
    Observe that $\varepsilon$ is not constant:
    Most occurrences of $\varepsilon$ could be numbered consecutively $\varepsilon_1,\varepsilon_2,\ldots$ so that for each $i$, we have $\varepsilon_{i+1} \ge \varepsilon_i$,
    and the last (biggest) one of them is still very small.}
    of the other segments of $\mathcal B_{0}$ and of $\partial W$.
    Each $s_i$ intersects $\partial W$ in two points, $a_i$ and $b_i$.
    We set $A:=\{a_1,\dots,a_k,b_1,\dots,b_k\}$.
    The $2k$ points of $A$ partition $\partial W$ into $2k$ polylines (each consisting of at most three axis-parallel segments), which we further call \emph{candidate curves}.

    We obtain the initial (embedded) frame tree ${\mathcal F}_0$ from the forest ${\mathcal B}_0$ by subdividing each segment $s_i$ by the two vertices $a_i$ and $b_i$ into three segments, and then adding a suitable set of candidate curves to ensure that $\mathcal{F}_0$ is connected.
    This set of candidate curves can be chosen, for example, greedily by considering them one by one and inserting each if and only if its insertion does not create a cycle.
    We further inductively construct an auxiliary sequence ${\mathcal F}_1,\dots,{\mathcal F}_k$ of $k$ frame trees having a gradually decreasing number of crossings with $\partial\bar\Sigma$, until the last of them ${\mathcal F}_k$ does not cross $\partial\bar\Sigma$ at all.
    Later, we will show how to use these trees to inductively redraw the segments $s_i$ so that their final polyline representations lie in the union of the small neighborhoods of the edges of ${\mathcal F}_k$, and thus not cross $\partial\bar\Sigma$.
    For $j\in\{1,\dots,k\}$, we obtain ${\mathcal F}_j$ from ${\mathcal F}_{j-1}$ by deleting the segment $f_j:=a_jb_j$ and adding any candidate curve $\gamma_j$ connecting the two different components of ${\mathcal F}_{j-1}-f_j$.
    Such a candidate curve must exist since the candidate curves form a cycle which intersects each of the two components of ${\mathcal F}_{j-1}-f_j$ (for example, in $a_j$ and in $b_j$, respectively).
    The last frame tree, ${\mathcal F}_k$, indeed does not cross $\partial\bar\Sigma$, as it contains none of the segments $f_j=a_jb_j$.

    It remains to describe how the frame trees  ${\mathcal F}_0,\dots,{\mathcal F}_k$ are used to change the polyline representations for the segments $s_i$ in $k$ steps so that after the last step none of them crosses $\partial\bar\Sigma$.
    We remark that the other polyline representations are not changed during the process, thus none of them crosses $\partial\bar\Sigma$ either.
    During the process, we keep the invariant that after the $j$-th step ($j\in \{1,\dots,k\}$), the polyline representation of every segment $s_i$, $i\in\{1,\ldots,k\}$, lies in the union of the small neighborhoods of the edges of the frame tree ${\mathcal F}_j$.

    We now describe the $j$-th step.
    For every $i\in\{1,\ldots,k\}$, the polyline representation of $s_i$ lies in the union of small neighborhoods of the edges of ${\mathcal F}_{j-1}$.
    Let $\rho_j$ be the boundary of a sufficiently small neighborhood of the frame tree ${\mathcal F}_j$.
    Since ${\mathcal F}_j$ is a tree, $\rho_j$ is a contractible closed curve.
    We define $\mathcal S_j$ to be the set of all connected components of the intersections of the current polyline representations of the segments $s_i$ (after the first $j-1$ steps) with the exterior of $\rho_j$.
    In other words, each element of $\mathcal S_j$ is a maximal portion of some current polyline representation that lies outside $\rho_j$ (see dashed blue and cyan segments in \cref{fig:phase1}).
    These curve portions cross $\partial\bar\Sigma$ near $f_j$ (the edge of ${\mathcal F}_{j-1}\setminus {\mathcal F}_{j}$),
    and connect points (their ``raw ends'') on $\rho_j$ near $a_j$ with points (their ``raw ends'') on $\rho_j$ near $b_j$.
    If $f_j$ crosses a vertical part of $\partial\bar\Sigma$, then these raw ends appear in the same bottom-to-top order near $a_j$ and near $b_j$.
    Otherwise, they appear in the same left-to-right order near $a_j$ and near $b_j$.
    We choose arbitrarily one of the two possible directions along~$\rho_j$ from the vicinity of $a_j$ to the vicinity of $b_j$.
    We reroute the curve portions in $\mathcal{S}_j$ so that they are newly drawn as polylines next to each other along this direction in an $\varepsilon$-neighborhood of~$\rho_j$, and each of them connects the corresponding pair of raw ends (one near $a_j$ and the other near $b_j$); see \cref{fig:phase1}.
    Due to the above-mentioned consistency in the bottom-to-top or left-to-right order, and since $\rho_j$ is a simple closed curve on an orientable surface, a sufficiently small neighborhood of~$\rho_j$ is homeomorphic to an annulus.
    Hence, the rerouted curve portions can be drawn there as pairwise disjoint parallel polylines, without crossing each other or any polyline representation in the current drawing.
    Thus, we obtain $\mathcal{B}_j$ from  $\mathcal{B}_{j-1}$ by the rerouting of the segments $\mathcal{S}_j$; see \cref{fig:phase1}.
    This completes the inductive description of how to reconfigure $\mathcal{B}$ to $\mathcal{B}_k = \mathcal{B}^*$.
    We discuss the number of moves and the runtime in Appendix~\ref{app:torus}.
\end{proof}

\paragraph*{Phase 2. Use the boundary of the fundamental square to solve the problem}
After Phase~1, no edge of $\mathcal B^*$ or $\mathcal R^*$ crosses the boundary of $\bar\Sigma$.
In Phase~2 we reconfigure $\mathcal B^*$ to $\mathcal R^*$.
We first consider the case where $F$ is a tree and then generalize to forests.

\begin{restatable}{rlemma}{reconfigurationtwoplaneembeddingstree}
    \label{lem:reconfiguration-of-two-plane-embeddings-tree}
    \renewcommand{\mylink}{\statlink{lem:reconfiguration-of-two-plane-embeddings}\prooflink{Preconfigurationtwoplaneembeddingstree}}
    Let $\mathcal{B}^*$ be a partial $\bar\Sigma$-embedding of subforest $F$ of a tree $T=(V,E)$ and $\mathcal{R}^*$ be a $\bar{\Sigma}$-embeddings of $T$ such that no edge curve of $\mathcal{B}^*$ or $\mathcal{R}^*$ crosses the boundary of the fundamental square $\bar{\Sigma}$.
    Let $\chi$ be the number of crossings between $\mathcal{B}^*$
    and $\mathcal{R}^*$  and $s^* = s(\mathcal{B}^*) + s(\mathcal{R}^*) $.
    There is a reconfiguration sequence from $\mathcal{B}^*$ to the restriction $\mathcal{R}^*[F]$ of $\mathcal{R}^*$ to $F$ of length $O(\chi + s^*)$.
    Each embedded graph in the sequence has at most $O(s^*(\chi + s^*))$ segments.
    There is an algorithm to compute the reconfiguration sequence in time $O(s^*(\chi + s^*)^2)$.
\end{restatable}

\begin{proof}
    For the reconfiguration algorithm of Phase~2,
    we iterate over the edges $e$ of $T$ in leaf-to-root order after choosing an arbitrary root for~$T$
    (i.e., each edge is processed only after all edges in the subtree below have been processed, for example, by processing the tree in a post-order traversal).
    In each stage, we \defn{fix} $e$, meaning that we reconfigure the edge curve $b = \mathcal B^* (e)$ to match $r = {\mathcal R}^*(e)$,
    or---if $b$ does not exist (which may happen if $\mathcal{B}^*$ is a partial embedding which we make use of in \cref{lem:reconfiguration-of-two-plane-embeddings})---just ensure that no edge curve of~$\mathcal B^*$ crosses $r$.
    After each stage, we ensure that no edge curve of $\mathcal B^*$ crosses the top/bottom boundary of $\bar\Sigma$, but allow right/left crossings.
    Embedding ${\mathcal R}^*$ will not change.

    The general subproblem is as follows.
    We have an edge $e = xy$ directed toward the root of $T$, with corresponding edge curves $r= \mathcal{R}^*(e)$ and $b= \mathcal{B}^*(e)$ (if it exists).
    Every edge $f$ in the subtree rooted at $x$ is a \defn{fixed} edge (that is,
    no edge of $\mathcal B^*$ crosses ${\mathcal R}^*(f)$ and, if $\mathcal B^* (f)$ exists, then
    $\mathcal B^* (f)= {\mathcal R}^*(f)$).
    No edge curve  of ${\mathcal R}^*$ crosses $\partial\bar\Sigma$.
    No edge curve of $\mathcal B^*$ crosses the top/bottom of $\bar\Sigma$.
    The goal is to fix edge $e$
    while reconfiguring only non-fixed edges of $\mathcal B^*$.

    In order to fix $e$,
    we must eliminate the crossings of $r$ with edge curves of~$\mathcal B^*$.
    The basic idea is to reroute the edge curves at crossings to detour around
    ${\mathcal R}^*_x$, the ${\mathcal R}^*$ embedding of the
    subtree rooted at~$x$.
    This is not immediately possible if $b$ exists, since such a detour would cross $b$ where it is incident to $x$.
    For example, in the first pane of Figure~\ref{fig:phase2},
    ${\mathcal B}^*(yz)$ crosses $r$, but rerouting it around
    ${\mathcal R}^*_x$, as shown in the last pane of the figure,
    would introduce a crossing with $b$.
    Furthermore, we know that we must make use of the torus.
    We therefore begin by rerouting $b$ (if it exists) around the torus.
    There are three steps (see Figure~\ref{fig:phase2}).

    \begin{enumerate}
        \item Reconfigure $b$ to a \defn{desire path}, an edge curve $d$ from $x$ to $y$  that initially goes alongside~$r$, never crosses $r$, and crosses the boundary of $\bar\Sigma$ once along the top/bottom boundary.
        \item
              Eliminate any crossing between $r$ and an edge curve of
              $\mathcal B^*$ by
              reconfiguring the  $\mathcal B^*$ piece of the crossing to walk around ${\mathcal R}^*_x$.
        \item Reconfigure $d$ to $r$. (This step is trivial and not further discussed.)
    \end{enumerate}

    \begin{figure}[t]
        \centering
        \includegraphics[width=\textwidth]{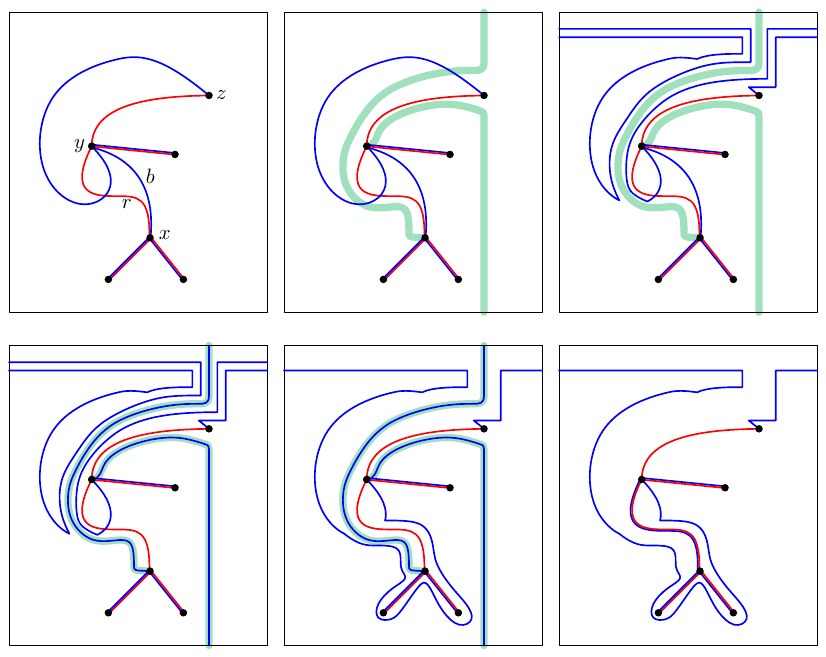}
        \caption{
        Phase 2 applied to edge $r = {\mathcal R}^* (xy)$  after fixing the subtree rooted at $x$.
        Top:
        input; Step 1a, choosing desire path $d$ (green); Step 1b, eliminating the two crossings along $d_x$.
        Bottom: Step 1c, rerouting $b$ to $d$; Step 2, eliminating the crossing along $r$; Step 3, reconfiguring $d$ to $r$.
        }
        \label{fig:phase2}
    \end{figure}

    \subparagraph{Step 1: Reconfigure $\bm{b}$ to a desire path $\bm{d}$.}

     $d$ must not cross any blue edge curve,
    but in planning $d$, we initially allow it to cross non-fixed blue edge curves and then reroute those to eliminate the crossings.
    Although not strictly necessary,
    we impose the stronger condition that the desire path $d$ does not cross any edge curve of ${\mathcal R}^*$.
    We use three substeps.

    \begin{description}
        \item[{\bf 1a.}] Construct $d$, allowing it to cross edge curves of $\mathcal B^*$ but not ${\mathcal R}^*$.

        \item[{\bf 1b.}] Eliminate any crossing between $d$ and an edge curve of
              $\mathcal B^*$ by reconfiguring the  $\mathcal B^*$ piece of the crossing to a curve that crosses the left/right boundary of $\bar\Sigma$.

        \item[{\bf 1c.}] Reconfigure $b$ to $d$.
    \end{description}

    \begin{figure}[tbh]
        \centering
        \includegraphics[scale=1.15]{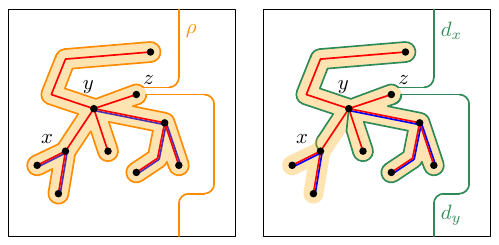}
        \caption{Step 1a: constructing $d$ for tree ${\mathcal R}^*$ rooted at $z$.
            (left) The neighborhood of ${\mathcal R}^*$ (shaded yellow) and the curve $\rho$ (yellow).
            (right) For edge $xy$, the two parts $d_x$ and $d_y$ of the desire path $d$.
        }
        \label{fig:rho}
    \end{figure}
    \subparagraph{Step 1a: Construct $\bm{d}$.}

    Let $\rho_0$ be the boundary of an $\varepsilon$-neighborhood of ${\mathcal R}^*$, where $\varepsilon>0$ is chosen small enough that $\rho_0$ is simple (i.e., non-crossing).
    Near the root vertex $z$ of the tree ${\mathcal R}^*$, remove a tiny arc of $\rho_0$ between two points
    $z_1$ and $z_2$ (in clockwise order around $\rho_0$) located just before and just after the closest approach of $\rho_0$ to $z$; by construction of $\rho_0$ as the boundary of a neighborhood, this arc does not cross ${\mathcal R}^*$.
    Augment the resulting path $\rho_0 \setminus z_1z_2$ to a closed curve $\rho$ by
    finding a path starting at $z_1$, traveling in the interior of $\bar\Sigma \setminus {\mathcal R}^*$ to a point on the top boundary of $\bar\Sigma$, crossing to the bottom boundary, and then traveling in the interior of $\bar\Sigma \setminus {\mathcal R}^*$ to $z_2$.
    See Figure~\ref{fig:rho} (left).
    Note that $\rho$ is a non-separating curve (w.r.t. $\Sigma$) and has
    $O(s({\mathcal R}^*))$
    segments -- we will use this in our analysis at the end of the proof.
    In fact, we can choose $\rho$ so that
    all but $O(1)$ of its segments
    travel alongside edge curves of ${\mathcal R}^*$, with each segment of ${\mathcal R}^*$ traversed $O(1)$ times.

    Desire path $d$ follows $\rho$ in the direction that keeps ${\mathcal R}^*$ on the right (from $z_2$ to $z_1$ along $\rho_0$, then across the top/bottom of $\bar\Sigma$), from $x$ to $y$.
    Since $\rho$ visits a neighborhood of each vertex $v$  $\mathrm{deg}(v)$ times, we must specify this more precisely.
    If we follow $\rho_0$ in this direction from $z_2$ to $z_1$, the first visit to any vertex follows the first (downward) traversal of its parent edge, and the last visit to any vertex precedes the second (upward) traversal of its parent edge.
    To construct $d$, we join $x$ to $\rho_0$ at the last visit of the edge $\{x, y\}$ (counterclockwise of $r$), then follow $\rho$ past $z_1$, across the top/bottom of $\bar\Sigma$, and past $z_2$, joining to $y$ at the last visit of the edge $\{x,y\}$;
    see Figure~\ref{fig:rho} (right).
    Observe that $d$ starts along $r$, does not cross ${\mathcal R}^*$, and crosses the boundary of $\bar\Sigma$ once on the top/bottom; i.e., $d$ is valid.

    \subparagraph{Step 1b. Eliminate crossings between $\bm{d}$ and $\bm{\mathcal{B}^*}$.}
    First, we resolve crossings on the directed portion $d_x$ from $x$ to the top of $\bar\Sigma$ in order, starting with the last one.
    Suppose $d_x$ crosses an edge curve ${\mathcal B}^* (f)$ at point $p$; see Figure~\ref{fig:step1band2}.
    Add new vertices $p_L$ and $p_R$ to ${\mathcal B}^* (f)$
    just~to the left and right of $d_x$.  Replace the part of ${\mathcal B}^* (f)$ from $p_L$ to $p_R$ by a curve traveling forward left of $d_x$ to just before the top of $\bar\Sigma$, then going left to the left boundary of $\bar\Sigma$, crossing to the other side, and continuing to just before $d_x$, then traveling backward left of $d_x$  until reaching $p_R$. This curve does not cross $d$ or any curve of $\mathcal B^*$ and each successively rerouted curve will \enquote{nest inside} the previous ones, i.e., lie closer to $d$ and closer to the top boundary of $\bar\Sigma$.
    After eliminating all crossings along $d_x$,  eliminate crossings
    on the portion of $d_y$ from $y$ to the bottom of $\bar\Sigma$ by rerouting them to cross the left/right boundary of $\bar\Sigma$ near the bottom.

    \begin{figure}[t]
        \centering
        \includegraphics[scale=1, trim = 1cm 0cm 1cm 0cm]{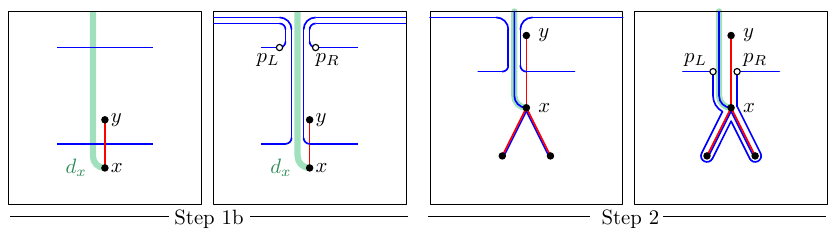}
        \caption{Steps 1b and 2: Eliminating crossings between  $\mathcal B^*$ and $d_x$ and $r$, respectively; shown schematically.
            Step 1: (left) $d_x$ is crossed by two segments of $\mathcal B^*$.
            (right) The crossings are rerouted to go around the left/right boundary of $\bar\Sigma$ near the top.
            Step 2:
            (left) $r$ is crossed by a segment of $\mathcal B^*$.
            (right) The crossed segment is rerouted to go around the fixed subtree rooted at $x$.}
        \label{fig:step1band2}
    \end{figure}

    \subparagraph{Step 1c. Reconfigure $\bm{b}$ to $\bm{d}$.}
    As $d$ is not crossed anymore, we can now reconfigure $b$ to $d$.

    \subparagraph{Step 2. Eliminate any crossing between $\bm{r}$ and an edge curve of $\bm{\mathcal{B}^*}$.}

    Direct $r$ from $x$ to $y$ and eliminate crossings in order, starting with the first one.
    Suppose that $r$ crosses an edge curve ${\mathcal B}^* (f)$ at point $p$.
    See Figure~\ref{fig:step1band2}.
    If $b$ does not exist, then let $p_L$ and $p_R$ be points on ${\mathcal B}^* (f)$ just to the left and right of $p$.
    Otherwise (if $b$ exists),
    we chose $d$ in Step 1a to start at $x$ and hug the edge curve $r$,
    so we know that ${\mathcal B}^* (f)$ used to cross $d$, and was rerouted in Step 1b to eliminate that crossing.
    Let $p_L$ be the point on ${\mathcal B}^* (f)$ where we diverted it to eliminate the crossing with $d$.
    Let $p_R$ be a point on ${\mathcal B}^* (f)$ just to the right of $r$.
    Having specified $p_L$ and $p_R$, we replace the part of ${\mathcal B}^* (f)$ from $p_L$ to $p_R$
    (which may detour along $d$ and loop around the left/right boundary of $\bar\Sigma$ near the top)
    by a curve that goes counterclockwise around ${\mathcal R}^*_x$, the fixed subtree rooted at $x$.
    This curve does not cross any curve of the current ${\mathcal B}^*$ and each successive rerouted curve will nest inside the previous ones, i.e., lie closer to ${\mathcal R}^*_x$.

\medskip
    This completes the description of the reconfiguration algorithm. We describe how to  reuse the same $\rho$ in each step and discuss the resulting complexity  in  Appendix~\ref{app:torus}.
\end{proof}

\begin{restatable}{rlemma}{forestlemma}
    \label{lem:reconfiguration-of-two-plane-embeddings}
    \renewcommand{\mylink}{\statlink{lem:reconfiguration-of-two-plane-embeddings}\prooflink{Pforestlemma}}
    Let $\mathcal{B}^*$ and $\mathcal{R}^*$ be two $\bar{\Sigma}$-embeddings of a forest $F=(V,E)$ such that no edge curve of $\mathcal{B}^*$ or $\mathcal{R}^*$ crosses the boundary of the fundamental square $\bar{\Sigma}$.
    Let $c$ be the number of connected components of $F$, let $\chi$ be the number of crossings between
    $\mathcal{B}^*$ and
    $\mathcal{R}^*$, and let $s^* = s(\mathcal{B}^*) + s(\mathcal{R}^*) $.
    Then there is a reconfiguration sequence from $\mathcal{B}^*$ to
    $\mathcal{R}^*$
    of length $O(c(\chi + s^*))$.
    Each embedded graph in the sequence has at most $O(c^2 s^*(\chi + s^*))$ segments.
    There is an algorithm to compute the reconfiguration sequence in time $O(c^3 s^*(\chi + s^*)^2)$.
\end{restatable}

\begin{proof}[Proof Sketch]
    We recursively augment $F$ to a tree $T$ and simultaneously augment $\mathcal{R}^*$  to an embedding $\mathcal{R}^A$ of $T$.
    In each step, we reduce the number of connected components by adding an augmenting edge $e_a$ that closely follows previously existing edge curves and one of the noncrossing triangulating edges chosen to connect the components of the planar straight-line graph induced by $\mathcal{R}^*$.
    Then, we apply \cref{lem:reconfiguration-of-two-plane-embeddings-tree}.
    In the process, we ignore the reconfiguration of augmenting edges, which have no corresponding curve in $\mathcal{B}^*$; i.e., $\mathcal{B}^*$ is a partial $\bar\Sigma$-embedding of $T$.
    See Appendix~\ref{app:torus} for details.
\end{proof}

\begin{proof}[Proof of \cref{thm:tree}]
    \label{Pthmforestontorus}
    From Phase 1 (Lemma~\ref{lem:making-p-plane}) $\mathcal B^*$ has $O\left(3^{k({\mathcal B})} s({\mathcal B})\right)$ segments and similarly for $\mathcal R^*$.
    In the worst case, every segment of $\mathcal B^*$ crosses every segment of $\mathcal R^*$, so $\chi$ is in $O(3^k s^2)$, where $k = k({\mathcal B}) + k({\mathcal R})$ and $s = s({\mathcal B}) + s({\mathcal R})$.
    Also $s^*$ is in $O(3^{k} s)$.
    In Phase 2 (Lemma~\ref{lem:reconfiguration-of-two-plane-embeddings}) the number of  steps is $O(c(\chi + s^*))$ which is $ O(c 3^k s^2)$.
    Each embedded graph in the  sequence has $O(c^2 s^*(\chi + s^*)) = O(c^2 3^{2k} s^3)$ segments.
    If we output an explicit list of each embedding in the sequence, the output size is $O(c^3 3^{3k}s^5)$.
    If we only output the reconfigured portion of each edge curve after each step,  the bound is $O(c^2 3^{2k} s^3)$.
    The exponential dependence on $k$ arises only in Phase 1.
    In positive terms, our algorithm is fixed parameter tractable in $k$.
\end{proof}

\noindent We discuss the following generalization to orientable surfaces of higher genus in Appendix~\ref{app:torus}.

\begin{restatable}{rcorollary}{corhighergenus}
    \label{cor:highergenus}
    \renewcommand{\cmylink}{\cstatlink{cor:highergenus}\cprooflink{Pcorhighergenus}}
    Let $F = (V,E)$ be a forest and let $\mathcal{B}$ and $\mathcal{R}$ be two $\bar{\Sigma}$-embeddings of $F$,
    where $\bar{\Sigma}$ is a fundamental polygon of genus $g\geq 1$.
    Let $c$ be the number of connected components of $F$,
    $s = s(\mathcal{B}) + s(\mathcal{R}) $ denote the total number of segments in  $\cal B$ and $\cal R$ and let $k = k(\mathcal{B}) + k(\mathcal{R})$ denote the total number of segments in $\cal B$ and $\cal R$  that intersect $\partial\bar\Sigma$.
    Then there is a reconfiguration sequence from $\mathcal{B}$ to $\mathcal{R}$ of length $O(c 3^k g^2s^2)$.
    Each embedded graph in the sequence has $O(c^2 3^{2k} g^3s^3)$ segments.
    There is an algorithm to compute the reconfiguration sequence in time proportional to the output size $O(c^3 3^{3k} g^5s^5)$.
\end{restatable}

\section{Reconfiguration of Planar Graphs on Orientable Surfaces}
\label{sec:restrictedSettings}

We study generalizations to planar graphs embedded on an orientable surface of genus $g\geq1$.

\subsection{Reconfiguration with Fixed Rotation System}
\label{ssec:embedding-preserving}

\begin{restatable}{rtheorem}{thmembeddingpreserving}
    \label{thm:embedding-preserving}
    \renewcommand{\mylink}{\statlink{thm:embedding-preserving}\prooflink{Pthmembeddingpreserving}}
    Let $\cal B$ and $\cal R$  be two $\Sigma$-embeddings of a planar graph $G$ on an orientable surface $\Sigma$ of genus $g\geq 1$. Moreover, let $\cal B$ and $\cal R$ have the same rotation system and suppose that there is a planar embedding $\cal E$ of $G$ with the same rotation system as $\cal B$ and $\cal R$. Then $\cal B$ can be reconfigured to $\cal R$. Furthermore, the rotation system of the $\Sigma$-embedding of $G$ remains the same throughout the reconfiguration sequence.
\end{restatable}

\begin{proof}[Proof Sketch]
    We give a brief overview of the proof of \Cref{thm:embedding-preserving}; the details can be found in Appendix~\ref{app:embedding-preserving}.
    Assume first that $G$ is connected. We choose a spanning tree of $G$ and reroute the edge curves of $\cal B$ and $\cal R$ to hug the spanning tree (\Cref{lem:hug-a-tree}, Appendix~\ref{app:embedding-preserving}).
    If $G$ itself is a tree, we can reduce the problem to \Cref{cor:highergenus} by first reconfiguring $\mathcal{B}$ to be identical to $\mathcal{R}$ in some small neighborhoods of the vertices (\Cref{lem:locally}, Appendix~\ref{app:embedding-preserving}), and then applying the algorithm in \Cref{sec:torus}, which does not change these neighborhoods.
    The rotation system remains unchanged (it is determined by the neighborhoods of the vertices).

    $G$ is not necessarily connected.
    We can augment its spanning forest to a spanning tree if the components each lie in small neighborhoods of their spanning trees, and they have a consistent \enquote{outer face} (\Cref{lem:hug-a-tree,lem:augmentation}, Appendix~\ref{app:embedding-preserving}).
    Finally, if $G$ is connected and $\mathcal{B}$ and $\mathcal{R}$ lie in the neighborhood of the embedding of a spanning tree $T$, we can follow the steps of the reconfiguration algorithm in \Cref{sec:torus} for $T$: Each time the algorithm reroutes an edge $e$ of $T$, we reroute a \enquote{bundle} of edges in the neighborhood of the edge curve of $e$.
\end{proof}

\subsection{Reconfiguration of Series-Parallel Graphs}
\label{ssec:series-parallel}

The family of \emph{series-parallel graphs}~\cite{DUFFIN1965303}  can be defined recursively as follows.
\begin{enumerate}
    \item The graph consisting of a single edge $st$ is a series-parallel graph with poles $s$ and $t$.
    \item Given two series-parallel graphs $G_1$ with poles $s_1$ and $t_1$ and $G_2$ with poles $s_2$ and $t_2$, the following are series-parallel: (i)~The \emph{series composition} obtained by identifying $t_1$ and $s_2$ (with poles $s_1$ and $t_2$). (ii)~The \emph{parallel composition} obtained by identifying $s_1$ and $s_2$ as well as identifying $t_1$ and $t_2$ (with poles $s_1=s_2$ and $t_1=t_2$).
\end{enumerate}

 \noindent Series-parallel graphs were studied as early as 1892~\cite{MACMAHON1994225} as they naturally occur in electrical networks~\cite{DUFFIN1965303,MACMAHON1994225,https://doi.org/10.1002/sapm194221183}.
Recently, series-parallel graphs have received attention both in graph algorithms~\cite{DBLP:journals/combinatorics/BencsHR23,DBLP:journals/acta/BlochHansenS25,DBLP:journals/cys/MedinaGMH22,DBLP:journals/dm/PanZ22} and in graph drawing~\cite{DBLP:journals/tcs/AngeliniBKM22,DBLP:conf/icaa/AzgorR25,DBLP:journals/algorithmica/DidimoKLO23,DBLP:journals/jgaa/Eppstein21}.
The families of edge-maximal series-parallel graphs and $2$-trees coincide~\cite{1992177} and the family of series-parallel graphs includes all $2$-connected graphs without $3$-connected minors~\cite{DBLP:journals/algorithmica/BienstockM90}.
Results on series-parallel graphs often cover all graphs of treewidth two and can be intermediate steps towards  results on all $2$-connected graphs --  embeddings of  $2$-connected graphs  can be enumerated using SPQR-trees that describe their construction using three operations (parallel, series and rigid)~\cite{DBLP:journals/siamcomp/BattistaT96}.

Notably, all plane embeddings of series-parallel graphs differ only in the order in which parallel subgraphs are sorted at their common poles; see e.g.~\cite{DBLP:journals/algorithmica/BienstockM90}.
We will assume without loss of generality that the final step in the recursive construction is a parallel composition (otherwise, we add a parallel composition with an edge connecting both poles at the end).

\begin{restatable}{rtheorem}{thmseriesparallel}
    \label{thm:seriesparallel}
    \renewcommand{\mylink}{\statlink{thm:seriesparallel}\prooflink{Pseriesparallel}}
    Let $G$ be a series-parallel graph and $\mathcal{B}$ and $\mathcal{R}$  two interior $\bar{\Sigma}$-embeddings of $G$ on a fundamental polygon $\bar{\Sigma}$ with $g \geq 1$. Then $\mathcal{B}$ can be reconfigured to $\mathcal{R}$.
\end{restatable}

\begin{figure}[t]
    \centering
    \includegraphics[width=\textwidth]{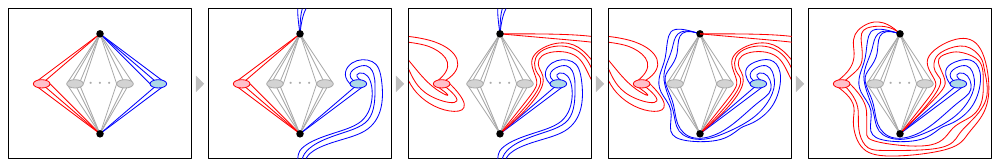}
    \caption{Reordering of two parallel components (red \& blue) based on an interior $\bar{\Sigma}$-embedding.
    }
    \label{fig:seriesparallel}
\end{figure}
\begin{proof}[Proof Sketch] For a pair of poles $s$ and $t$, we describe a procedure to exchange  two  parallel components $C_1$ and $C_2$ occurring consecutively in clockwise orientation around $s$ (and hence in counterclockwise orientation around $t$).
    We first make the face between $C_1$ and $C_2$ the outer face of the embedding using \cref{thm:embedding-preserving}, yielding the starting configuration shown in \cref{fig:seriesparallel}.
    Then, we first reconfigure the edges of $C_1$ incident to $s$ so that they intersect the vertical part of $\partial \bar{\Sigma}$, then we let the edges of $C_2$ incident to $t$ intersect the horizontal part of $\partial \bar{\Sigma}$ so that $C_1$ and $C_2$ have successfully exchanged their positions; see the second and third subfigures in \cref{fig:seriesparallel}.
    Then, we transform back to an embedding contained within the torus boundary while maintaining the rotation system.
    See Appendix~\ref{app:series-parallel} for more details.
\end{proof}

\section{Reconfiguration in the Projective Plane}
\label{sec:projective}

\begin{restatable}{rtheorem}{thmprojective}
    \label{thm:projective}
    Let $G$ be a perfect matching on $2n$ vertices, and let $\mathcal{B}$ and $\mathcal{R}$ be two $\mathbb{R}P^2$-embeddings of $G$ such that there is a single region homeomorphic to a disk that contains both embeddings.
    Then, $\mathcal{B}$ and $\mathcal{R}$ are reconfigurable.
\end{restatable}

\begin{proof}
    By applying a homeomorphism, we may assume that $\mathcal{B}$ and $\mathcal{R}$ are given in the above described representation of $\mathbb{R}P^2$, with a sphere and a crosscap $\partial D$, in which the antipodal points of the circle $\partial D$ are identified.
    The homeomorphism should map the region containing both embeddings to a disk disjoint from the crosscap. Let $G=(V,E)$ be a perfect matching on $2n$ vertices $V=\{s_1,\ldots,s_n,t_1,\ldots,t_n\}$ and $n$ edges $E=\{e_i=s_it_i: 1\leq i\leq n\}$. Let $\mathcal{B}$ and $\mathcal{R}$ be two $\mathbb{R}P^2$-embeddings of $G$ such that $\mathcal{B}(e_i)=P_i$ and $\mathcal{R}(e_i)=Q_i$ for  all $i$, $1\leq i\leq n$.

    We construct a \emph{canonical embedding} $\mathcal{M}$, where every curve passes through the crosscap exactly once, and then show how to transform $\mathcal{B}$ (resp., $\mathcal{R}$) to $\mathcal{M}$.
    We construct the canonical matching $\mathcal{M}$ as follows: Choose $2n$ equally spaced points along the circle $\partial D$, denoted $(a_1,\ldots a_n,b_1,\ldots, b_n)$ in counterclockwise order. Note that $a_i$ and $b_i$ are antipodal points in $\partial D$, hence they are identified in $\mathbb{R}P^2$.
    Since both embeddings lie in a disk disjoint from the crosscap, we may first choose pairwise disjoint arcs from the points $a_1,\ldots,a_n,b_1,\ldots,b_n$ to the boundary of that disk in the annulus between the two boundaries.
    We then extend these arcs inside the disk one by one to the vertices $s_1,\ldots,s_n,t_1,\ldots,t_n$ while avoiding the previously constructed arcs.
    Thus, for $i=1,\ldots,n$, we obtain pairwise disjoint paths $a_is_i$ and $b_it_i$ that cross the curves $P_j$ and $Q_j$, $1\leq j\leq n$, only finitely many times; moreover, each path $a_is_i$ avoids all paths $a_js_j$, $j<i$, and each path $b_it_i$ avoids all paths $a_js_j$, $1\leq j\leq n$, and $b_jt_j$, $j<i$.
    The canonical matching is now $\mathcal{M}=\{ M_i= s_ia_i\cup b_it_i\colon i\in \{1\ldots , n\}\}$.

    Next, construct a reconfiguration sequence from $\mathcal{B}$ to $\mathcal{M}$. For $i=1,\ldots, n$, replace the curve $P_i$ with $M_i$ (in several moves) as follows. Suppose all curves $P_j$, $j<i$, have already been replaced by $M_j$ (curves $P_i,\ldots, P_n$ may have been modified in the process). We seek to replace $P_i$ with $M_i$ without modifying any curve $M_j$, $j<i$.
    To do so, successively modify the curves $P_i,\ldots , P_n$ and eliminate all their crossings with $M_i$. Then, replace $P_i$ with $M_i$.
    We implement this strategy in two phases: First, eliminate all crossings of  $P_i,\ldots , P_n$ with $a_is_i$ (without introducing new crossings with $b_it_i$); then eliminate all crossings with $b_it_i$.

    Consider a \emph{thickening} $N_i$ of $\partial D\cup M_i\cup \left(\bigcup_{j< i}M_j\right)$, that is, a $\delta$-neighborhood
    for a small $\delta>0$; see \cref{fig:stages}~(upper left).
    Assume $\delta>0$ is sufficiently small so that for all $\ell>i$, every connected component of $N_i\cap P_\ell$ intersects $M_i$.
    Denote these connected components (arcs) by $\gamma_1,\ldots ,\gamma_t$ in the order in which they cross $a_is_i$ (from $a_i$ to $s_i$). Let $P(\gamma_j)$ be the curve that contains the arc $\gamma_j$ for $j=1,\ldots, t$.
    We modify $P_i,\ldots ,P_n$ in four stages.
    \begin{itemize}
        \item \textcolor{lipicsLineGray}{\sffamily \bfseries Stage~1:} A simplification step (described below) that modifies $P_i,\ldots ,P_n$ such that (i) $P_i$ does not cross $M_i$, and (ii) for every $j$, $i < j \leq n$, the curve $P_j$ crosses $M_i$ at most once.
        \item \textcolor{lipicsLineGray}{\sffamily \bfseries Stage~2:} For $\ell=1,2,\ldots ,t$, replace $\gamma_\ell$ with an arc $\gamma'_\ell$ between the same endpoints that closely follows $M_i$ (without crossing it) to $\partial D$, then crosses the crosscap, and finally closely follows $M_i\cup (\bigcup_{j<i}M_j)\cup (\bigcup_{j<\ell} P(\gamma_j))$; see \cref{fig:stages}~(top row).
        \item \textcolor{lipicsLineGray}{\sffamily \bfseries Stage~3:} Replace $P_i$ with $M_i$.
        \item \textcolor{lipicsLineGray}{\sffamily \bfseries Stage~4:} For each $\ell=t,t-1,\ldots , 1$, replace $\gamma'_\ell$ with an arc $\gamma''_\ell$ that closely follows the boundary of the neighborhood $N_i$, and goes around $s_i$; see \cref{fig:stages}~(bottom row).
    \end{itemize}

    \begin{figure}[t]
        \centering
        \includegraphics[width=\textwidth]{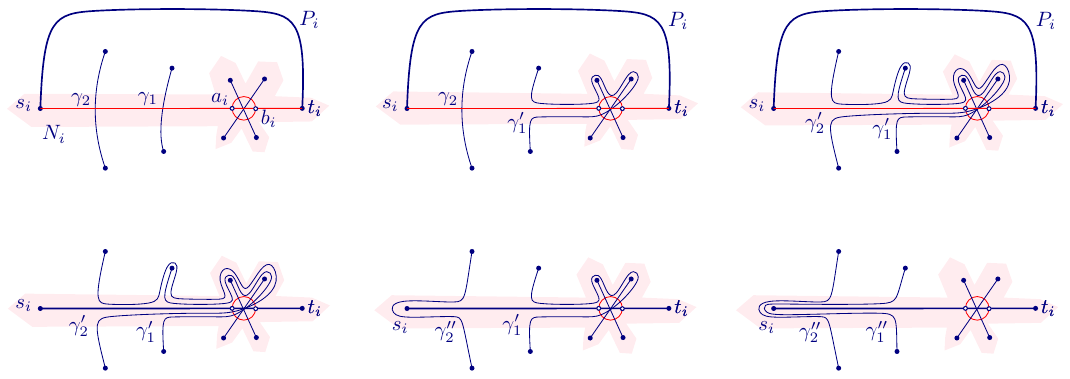}
        \caption{Replacing curve $P_i$ (blue) with $M_i$ (red). The thickening $N_i$ of $\partial D\cup M_i\cup \left(\bigcup_{j< i}M_j\right)$ is highlighted in pink. Initially, $\gamma_1$ and $\gamma_2$ cross $N_i$. Top row: Stage~2. Bottom row: Stage 4.}
        \label{fig:stages}
    \end{figure}

    It remains to describe Stage~1 (the simplification stage) in detail.
    In Stage~1, we apply the following claim while there is a curve in $\{P_i,\ldots, P_n\}$ that intersects $a_is_i$ more than once.

    \begin{restatable}{rclaim}{claimprojectiveplane}
        \label{claim:claimprojectiveplane}
        \renewcommand{\cmylink}{\cstatlink{claim:claimprojectiveplane}\cprooflink{Pclaimprojectiveplane}}

        While some curve in $\mathcal{P} = \{P_i,\ldots ,P_n\}$ intersects $M_i\setminus \{s_i\}$ at least twice,
        one can modify $\mathcal{P}$ to reduce the
        number of intersections between $M_i\setminus \{s_i\}$ and
        curves in $\mathcal{P}$.
    \end{restatable}

    We prove \cref{claim:claimprojectiveplane} in Appendix~\ref{app:projectivePlain}.
    Consequently, $\mathcal{B}$ and $\mathcal{R}$ can each be reconfigured to  $\mathcal{M}$ and combining the two reconfiguration sequences
    shows that $\mathcal{B}$ is reconfigurable to $\mathcal{R}$.
\end{proof}

\begin{remark}
    In \cref{thm:projective},  none of the edges pass through the crosscap; and we constructed an intermediate embedding in which every edge passes through the crosscap precisely once.
    It is not difficult to generalize \cref{thm:projective} to input embeddings in which one edge may pass through the crosscap arbitrarily many times, and all other edges pass through the crosscap at most once each (in this case, we can create a clean crosscap).
    However, it is unclear how to reconfigure $\mathbb{R}P^2$-embeddings where multiple edges pass through the crosscap multiple times.
\end{remark}

\section{Non-Reconfigurable Graph Embeddings on Surfaces}
\label{sec:counter}

\begin{restatable}{rtheorem}{thmnegative}
    \label{thm:negative}
    \renewcommand{\mylink}{\statlink{thm:negative}\prooflink{Pnegative}}
    Let $\Sigma$ be a surface.
    \begin{enumerate}[1.]
        \item\label{thm:negative:1} There exists a graph $H_1$ and two $\Sigma$-embeddings $\mathcal{B}_1$ and $\mathcal{R}_1$ of $H_1$ that are not reconfigurable into each other.
              This holds even if we require that $\mathcal{B}_1$ and $\mathcal{R}_1$ have the same rotation system.
        \item\label{thm:negative:2} There exist a graph $H_2$ and two rotation systems $B_2$ and $R_2$ such that $H_2$ admits $\Sigma$-embeddings implementing $B_2$ and $R_2$,
              but any pair of $\Sigma$-embeddings $\mathcal{B}_2$ and $\mathcal{R}_2$ implementing $B_2$ and $R_2$ that have the vertices of $H_2$ embedded on the same points in $\Sigma$ are not reconfigurable into each other.
    \end{enumerate}
\end{restatable}

\begin{proof}[Proof Sketch]
    For the plane (and the sphere), Ito et al.~\cite{ItoIK0MNOO25} showed that a perfect matching on four vertices $G_0$ has two embeddings, $\mathcal{B}_0$ and $\mathcal{R}_0$, that are not reconfigurable; see also \cref{fig:intro:1}. On any surface $\Sigma$, we can embed $G_0$ within a triangle $\Delta (abc)$ as shown in \cref{fig:negative} (ignoring the dashed edges), yielding the gadget graph $G_1$. For statement~\ref{thm:negative:1}, we use a uniquely embeddable triangulation $H=H(\Sigma)$ in which every $3$-cycle is facial in every embedding; its construction is given in Appendix~\ref{app:negative} using a theorem of Negami~\cite[Theorem~3.3]{Negami1985}. We identify one of its faces with $\Delta (abc)$ and insert the two embeddings of $G_1$ shown in \cref{fig:negative}, respectively.
    \begin{figure}[htbp]
        \centering
        \includegraphics[width=0.8\textwidth, page=2]{impossible.pdf}
        \caption{Illustration for the proof of \cref{thm:negative}.}
        \label{fig:negative}
    \end{figure}

    \noindent As a result, reconfiguration is not possible due to the result by Ito et al.~\cite{ItoIK0MNOO25}.
    For statement~\ref{thm:negative:2}, we augment the embeddings of $G_1$ with the two dashed edges in \cref{fig:negative}, obtaining graph $G_2$ (part of the augmented graph $H_2$).
    We then argue that in any embedding with one of the two fixed rotation systems, we must have the same topology for $G_2$.
    This again contradicts reconfigurability. See also~Appendix~\ref{app:negative}.
\end{proof}

\section{Open Problems}
\label{sec:open}
\begin{enumerate}
    \item\label{openproblem:1} We do not know if our main result for forests extends to non-orientable surfaces.
    We conjecture that a forest can always be reconfigured on any non-orientable surface.
    Phase 1 of our algorithm from \cref{sec:torus} (see \cref{lem:making-p-plane}) likely extends to non-orientable surfaces, since its shortcuts are constructed within the chosen flat representation. 
    In contrast, this extension does not seem readily available for Phase 2 (see \cref{lem:reconfiguration-of-two-plane-embeddings-tree}),
    where the rerouting argument uses that crossing an identified side of the flat representation preserves the left-to-right order of parallel arcs.
    For orientation-reversing side identifications on non-orientable surfaces, this order is reversed, so the same detour argument breaks down.
    \item\label{openproblem:2} Our result on the reconfigurability of two different interior $\bar{\Sigma}$-embeddings of series-parallel graphs may generalize to
          general planar graphs.
    The missing link is triconnected planar graphs, where reconfiguration into a mirrored embedding is required.
    \item\label{openproblem:3} Given a positive answer to Open Problem~\ref{openproblem:1}, one may wonder if we can obtain positive results for other (subclasses of) planar graphs on non-orientable surfaces.
    \item The runtime of our algorithm in \cref{sec:torus} is exponential in the complexity of the input embeddings.
    We conjecture that there is an exponential lower bound.
    We also ask what is the computational complexity of the \texttt{$\Sigma$-Embedding Reconfiguration} problem.
\end{enumerate}

\bibliographystyle{plainurl}
\bibliography{rerouting-curves}

@article{ItoIK0MNOO25,
author = {Ito, Takehiro and Iwamasa, Yuni and Kakimura, Naonori and Kobayashi, Yusuke and Maezawa, Shun-Ichi and Nozaki, Yuta and Okamoto, Yoshio and Ozeki, Kenta},
title = {Rerouting Planar Curves and Disjoint Paths},
year = {2025},
volume = {21},
number = {2},
doi = {10.1145/3715694},
journal = {ACM Trans. Algorithms},
articleno = {20},
numpages = {37}
}

@inproceedings{chambers2021morph,
  title={How to morph graphs on the torus},
  author={Chambers, Erin Wolf and Erickson, Jeff and Lin, Patrick and Parsa, Salman},
  booktitle={Proc.\ 32nd ACM-SIAM Symposium on Discrete Algorithms ({SODA})},
  pages={2759--2778},
  year={2021},
doi={10.1137/1.9781611976465.164}
}

@article{ericksonplanar,
  title={Planar and Toroidal Morphs Made Easier},
  author={Erickson, Jeff and Lin, Patrick},
  journal={Journal of Graph Algorithms and Applications},
  year={2023},
  volume={27},
  issue={2},
  pages={95--118},
  doi={10.7155/jgaa.00616}
}

@incollection{de2017computational,
  title={Computational topology of graphs on surfaces},
  author={Colin de Verdi{\`e}re, {\'E}ric},
  booktitle={Handbook of Discrete and Computational Geometry},
editor={T\'oth, Csaba D. and O'Rourke, Joseph and Goodman, Jacob E.},
  pages={605--636},
  year={2017},
chapter={23},
  publisher={Chapman and Hall/CRC},
  doi={10.1201/9781315119601}
}

@inproceedings{de2024untangling,
  title={Untangling graphs on surfaces},
  author={Colin de Verdi{\`e}re, {\'E}ric and Despr{\'e}, Vincent and Dubois, Lo{\"\i}c},
  booktitle={Proc.\ 35th ACM-SIAM Symposium on Discrete Algorithms ({SODA})},
  pages={4909--4941},
  year={2024},
doi={10.1137/1.9781611977912.175}
}

@article{DBLP:journals/dcg/ChangE17,
  author       = {Hsien{-}Chih Chang and
                  Jeff Erickson},
  title        = {Untangling Planar Curves},
  journal      = {Discret. Comput. Geom.},
  volume       = {58},
  number       = {4},
  pages        = {889--920},
  year         = {2017},
  _url          = {https://doi.org/10.1007/s00454-017-9907-6},
  doi          = {10.1007/S00454-017-9907-6},
  timestamp    = {Thu, 12 Mar 2020 17:20:54 +0100},
  biburl       = {https://dblp.org/rec/journals/dcg/ChangE17.bib},
  bibsource    = {dblp computer science bibliography, https://dblp.org}
}

@article{DBLP:journals/talg/ChangM22,
  author       = {Hsien{-}Chih Chang and
                  Arnaud de Mesmay},
  title        = {Tightening Curves on Surfaces Monotonically with Applications},
  journal      = {{ACM} Trans. Algorithms},
  volume       = {18},
  number       = {4},
  pages        = {36:1--36:32},
  year         = {2022},
  url          = {https://doi.org/10.1145/3558097},
  doi          = {10.1145/3558097},
  bibsource    = {dblp computer science bibliography, https://dblp.org}
}

@inproceedings{DBLP:conf/soda/Chang0LMSSTT18,
  author       = {Hsien{-}Chih Chang and
                  Jeff Erickson and
                  David Letscher and
                  Arnaud de Mesmay and
                  Saul Schleimer and
                  Eric Sedgwick and
                  Dylan Thurston and
                  Stephan Tillmann},
  _editor       = {Artur Czumaj},
  title        = {Tightening Curves on Surfaces via Local Moves},
  booktitle    = {Proc.\ 29th {ACM-SIAM} Symposium on Discrete
                  Algorithms ({SODA})},
  pages        = {121--135},
  _publisher    = {{SIAM}},
  year         = {2018},
  _url          = {https://doi.org/10.1137/1.9781611975031.8},
  doi          = {10.1137/1.9781611975031.8},
  timestamp    = {Tue, 02 Feb 2021 17:07:58 +0100},
  biburl       = {https://dblp.org/rec/conf/soda/Chang0LMSSTT18.bib},
  bibsource    = {dblp computer science bibliography, https://dblp.org}
}

@article{alamdari2017morph,
  title={How to morph planar graph drawings},
  author={Alamdari, Soroush and Angelini, Patrizio and Barrera-Cruz, Fidel and Chan, Timothy M. and Lozzo, Giordano Da and Battista, Giuseppe Di and Frati, Fabrizio and Haxell, Penny and Lubiw, Anna and Patrignani, Maurizio and Roselli, Vincenzo and Singla, Sahil and Wilkinson, Bryan T.},
  journal={SIAM Journal on Computing},
  volume={46},
  number={2},
  pages={824--852},
  year={2017},
doi={10.1137/16M1069171}
}

@article{DBLP:journals/algorithmica/BienstockM90,
  author       = {Daniel Bienstock and
                  Clyde L. Monma},
  title        = {On the Complexity of Embedding Planar Graphs To Minimize Certain Distance
                  Measures},
  journal      = {Algorithmica},
  volume       = {5},
  number       = {1},
  pages        = {93--109},
  year         = {1990},
  url          = {https://doi.org/10.1007/BF01840379},
  doi          = {10.1007/BF01840379},
  bibsource    = {dblp computer science bibliography, https://dblp.org}
}

@article{DBLP:journals/algorithms/Nishimura18,
  author       = {Naomi Nishimura},
  title        = {Introduction to Reconfiguration},
  journal      = {Algorithms},
  volume       = {11},
  number       = {4},
  pages        = {52},
  year         = {2018},
  _url          = {https://doi.org/10.3390/a11040052},
  doi          = {10.3390/A11040052},
  timestamp    = {Tue, 14 Aug 2018 12:19:16 +0200},
  biburl       = {https://dblp.org/rec/journals/algorithms/Nishimura18.bib},
  bibsource    = {dblp computer science bibliography, https://dblp.org}
}

@article{DBLP:journals/siamcomp/BattistaT96,
  author       = {Giuseppe Di Battista and
                  Roberto Tamassia},
  title        = {On-Line Planarity Testing},
  journal      = {{SIAM} J. Comput.},
  volume       = {25},
  number       = {5},
  pages        = {956--997},
  year         = {1996},
  url          = {https://doi.org/10.1137/S0097539794280736},
  doi          = {10.1137/S0097539794280736},
  bibsource    = {dblp computer science bibliography, https://dblp.org}
}

@incollection{DBLP:books/cu/p/Heuvel13,
  author       = {Jan {van den Heuvel}},
  editor       = {Simon R. Blackburn and
                  Stefanie Gerke and
                  Mark Wildon},
  title        = {The complexity of change},
  booktitle    = {Surveys in Combinatorics 2013},
  series       = {London Mathematical Society Lecture Note Series},
  volume       = {409},
  pages        = {127--160},
  publisher    = {Cambridge University Press},
  year         = {2013},
  url          = {https://doi.org/10.1017/CBO9781139506748.005},
  doi          = {10.1017/CBO9781139506748.005},
  bibsource    = {dblp computer science bibliography, https://dblp.org}
}

@book{viro2024elementary,
  title={Elementary Topology: Problem Textbook},
  author={Viro, Oleg Yanovich and Ivanov, Oleg Aleksandrovich and Netsvetaev, Nikita Yur\'{e}vich and Kharlamov, Viatcheslav Mikha\v{\i}lovich},
  volume={54},
  year={2024},
  publisher={American Mathematical Society},
  doi={10.1090/mbk/054}
}

@book{MT2001,
  author       = {Bojan Mohar and
                  Carsten Thomassen},
  title        = {Graphs on Surfaces},
  series       = {Johns Hopkins Series in the Mathematical Sciences},
  publisher    = {Johns Hopkins University Press},
  year         = {2001},
  doi          = {10.56021/9780801866890},
  _url          = {http://jhupbooks.press.jhu.edu/ecom/MasterServlet/GetItemDetailsHandler?iN=9780801866890\&qty=1\&source=2\&viewMode=3\&loggedIN=false\&JavaScript=y},
url={https://users.fmf.uni-lj.si/mohar/Book.html},
  isbn         = {978-0-8018-6689-0},
  timestamp    = {Tue, 19 Feb 2013 14:35:24 +0100},
  biburl       = {https://dblp.org/rec/books/daglib/0030489.bib},
  bibsource    = {dblp computer science bibliography, https://dblp.org}
}

@article{Negami1985,
  author  = {Seiya Negami},
  title   = {Construction of Graphs Which Are Not Uniquely and Not Faithfully Embeddable in Surfaces},
  journal = {Yokohama Mathematical Journal},
  volume  = {33},
  number  = {1--2},
  pages   = {67--91},
  year    = {1985},
  url     = {https://ynu.repo.nii.ac.jp/records/6751}
}

@article{MACMAHON1994225,
title = {The combinations of resistances},
journal = {Discrete Applied Mathematics},
volume = {54},
number = {2},
pages = {225--228},
year = {1994},
issn = {0166-218X},
doi = {https://doi.org/10.1016/0166-218X(94)90024-8},
_url = {https://www.sciencedirect.com/science/article/pii/0166218X94900248},
author = {Percy A. Macmahon}
}

@article{https://doi.org/10.1002/sapm194221183,
author = {Riordan, John and Shannon, Claude E.},
title = {The Number of Two-Terminal Series-Parallel Networks},
journal = {Journal of Mathematics and Physics},
volume = {21},
number = {1-4},
pages = {83--93},
doi = {https://doi.org/10.1002/sapm194221183},
_url = {https://onlinelibrary.wiley.com/doi/abs/10.1002/sapm194221183},
_eprint = {https://onlinelibrary.wiley.com/doi/pdf/10.1002/sapm194221183},
year = {1942}
}

@article{DUFFIN1965303,
title = {Topology of series-parallel networks},
journal = {Journal of Mathematical Analysis and Applications},
volume = {10},
number = {2},
pages = {303--318},
year = {1965},
issn = {0022-247X},
doi = {https://doi.org/10.1016/0022-247X(65)90125-3},
_url = {https://www.sciencedirect.com/science/article/pii/0022247X65901253},
author = {Richard J. Duffin}
}

@inproceedings{DBLP:conf/icaa/AzgorR25,
  author       = {Sk Ruhul Azgor and
                  Md. Saidur Rahman},
  _editor       = {Subhas C. Nandy and Rajat K. De and Prosenjit Gupta},
  title        = {On the Rique Number of Series-Parallel Graphs and Planar Bipartite
                  Graphs},
  booktitle    = {Proc.\ 2nd International Conference on Applied Algorithms ({ICAA})},
  series       = {LNCS},
  volume       = {15505},
  pages        = {3--14},
  publisher    = {Springer},
  year         = {2025},
  _url          = {https://doi.org/10.1007/978-3-031-84543-7_1},
  doi          = {10.1007/978-3-031-84543-7_1},
  timestamp    = {Fri, 09 May 2025 20:28:50 +0200},
  biburl       = {https://dblp.org/rec/conf/icaa/AzgorR25.bib},
  bibsource    = {dblp computer science bibliography, https://dblp.org}
}

@article{DBLP:journals/algorithmica/DidimoKLO23,
  author       = {Walter Didimo and
                  Michael Kaufmann and
                  Giuseppe Liotta and
                  Giacomo Ortali},
  title        = {Computing Bend-Minimum Orthogonal Drawings of Plane Series-Parallel
                  Graphs in Linear Time},
  journal      = {Algorithmica},
  volume       = {85},
  number       = {9},
  pages        = {2605--2666},
  year         = {2023},
  _url          = {https://doi.org/10.1007/s00453-023-01110-6},
  doi          = {10.1007/S00453-023-01110-6},
  timestamp    = {Wed, 01 Nov 2023 08:59:33 +0100},
  biburl       = {https://dblp.org/rec/journals/algorithmica/DidimoKLO23.bib},
  bibsource    = {dblp computer science bibliography, https://dblp.org}
}

@article{DBLP:journals/tcs/AngeliniBKM22,
  author       = {Patrizio Angelini and
                  Michael A. Bekos and
                  Philipp Kindermann and
                  Tamara Mchedlidze},
  title        = {On mixed linear layouts of series-parallel graphs},
  journal      = {Theor. Comput. Sci.},
  volume       = {936},
  pages        = {129--138},
  year         = {2022},
  _url          = {https://doi.org/10.1016/j.tcs.2022.09.019},
  doi          = {10.1016/J.TCS.2022.09.019},
  timestamp    = {Sun, 13 Nov 2022 17:53:45 +0100},
  biburl       = {https://dblp.org/rec/journals/tcs/AngeliniBKM22.bib},
  bibsource    = {dblp computer science bibliography, https://dblp.org}
}

@article{DBLP:journals/jgaa/Eppstein21,
  author       = {David Eppstein},
  title        = {Bipartite and Series-Parallel Graphs Without Planar Lombardi Drawings},
  journal      = {J. Graph Algorithms Appl.},
  volume       = {25},
  number       = {1},
  pages        = {549--562},
  year         = {2021},
  _url          = {https://doi.org/10.7155/jgaa.00571},
  doi          = {10.7155/JGAA.00571},
  timestamp    = {Mon, 14 Feb 2022 17:12:42 +0100},
  biburl       = {https://dblp.org/rec/journals/jgaa/Eppstein21.bib},
  bibsource    = {dblp computer science bibliography, https://dblp.org}
}

@article{DBLP:journals/acta/BlochHansenS25,
  author       = {Andrew Bloch{-}Hansen and
                  Roberto Solis{-}Oba},
  title        = {The thief orienteering problem on 2-terminal series-parallel graphs},
  journal      = {Acta Informatica},
  volume       = {62},
  number       = {2},
  pages        = {18},
  year         = {2025},
  _url          = {https://doi.org/10.1007/s00236-025-00486-y},
  doi          = {10.1007/S00236-025-00486-Y},
  timestamp    = {Mon, 07 Apr 2025 12:58:05 +0200},
  biburl       = {https://dblp.org/rec/journals/acta/BlochHansenS25.bib},
  bibsource    = {dblp computer science bibliography, https://dblp.org}
}

@article{DBLP:journals/combinatorics/BencsHR23,
  author       = {Ferenc Bencs and
                  Jeroen Huijben and
                  Guus Regts},
  title        = {On the Location of Chromatic Zeros of Series-Parallel Graphs},
  journal      = {Electron. J. Comb.},
  volume       = {30},
  number       = {3},
  year         = {2023},
  _url          = {https://doi.org/10.37236/11204},
  doi          = {10.37236/11204},
  timestamp    = {Fri, 18 Aug 2023 13:01:54 +0200},
  biburl       = {https://dblp.org/rec/journals/combinatorics/BencsHR23.bib},
  bibsource    = {dblp computer science bibliography, https://dblp.org}
}

@article{DBLP:journals/cys/MedinaGMH22,
  author       = {Marco A. L{\'{o}}pez Medina and
                  J. Leonardo Gonz{\'{a}}lez{-}Ruiz and
                  Jos{\'{e}} Raymundo Marcial{-}Romero and
                  Jos{\'{e}} Antonio Hern{\'{a}}ndez Serv{\'{\i}}n},
  title        = {Computing the Clique-Width on Series-Parallel Graphs},
  journal      = {Computaci{\'{o}}n y Sistemas},
  volume       = {26},
  number       = {2},
  year         = {2022},
  doi          = {10.13053/CYS-26-2-4250},
  _url          = {https://cys.cic.ipn.mx/ojs/index.php/CyS/article/view/4250},
  timestamp    = {Fri, 14 Mar 2025 11:51:31 +0100},
  biburl       = {https://dblp.org/rec/journals/cys/MedinaGMH22.bib},
  bibsource    = {dblp computer science bibliography, https://dblp.org}
}

@article{DBLP:journals/dm/PanZ22,
  author       = {Zhishi Pan and
                  Xuding Zhu},
  title        = {The circular chromatic numbers of signed series-parallel graphs},
  journal      = {Discret. Math.},
  volume       = {345},
  number       = {3},
  pages        = {112733},
  year         = {2022},
  _url          = {https://doi.org/10.1016/j.disc.2021.112733},
  doi          = {10.1016/J.DISC.2021.112733},
  timestamp    = {Tue, 08 Feb 2022 10:43:36 +0100},
  biburl       = {https://dblp.org/rec/journals/dm/PanZ22.bib},
  bibsource    = {dblp computer science bibliography, https://dblp.org}
}

@incollection{1992177,
title = {Chapter 5 Polynomially Solvable Cases},
author = {Frank K. Hwang and Dana S. Richards and Pawel Winter},
series = {Annals of Discrete Mathematics},
publisher = {Elsevier},
volume = {53},
pages = {177-188},
year = {1992},
booktitle = {The Steiner Tree Problem},
issn = {0167-5060},
doi = {https://doi.org/10.1016/S0167-5060(08)70202-4},
_url = {https://www.sciencedirect.com/science/article/pii/S0167506008702024}
}

@inproceedings{ChangEX15,
  author       = {Hsien{-}Chih Chang and
                  Jeff Erickson and
                  Chao Xu},
  editor       = {Piotr Indyk},
  title        = {Detecting Weakly Simple Polygons},
  booktitle    = {Proc.\ 26th {ACM-SIAM} Symposium on Discrete
                  Algorithms (SODA)},
  pages        = {1655--1670},
  _publisher    = {{SIAM}},
  year         = {2015},
  _url          = {https://doi.org/10.1137/1.9781611973730.110},
  doi          = {10.1137/1.9781611973730.110},
  timestamp    = {Tue, 02 Feb 2021 17:07:32 +0100},
  biburl       = {https://dblp.org/rec/conf/soda/ChangEX15.bib},
  bibsource    = {dblp computer science bibliography, https://dblp.org}
}

@article{AkitayaAET17,
  author       = {Hugo A. Akitaya and
                  Greg Aloupis and
                  Jeff Erickson and
                  Csaba D. T{\'{o}}th},
  title        = {Recognizing Weakly Simple Polygons},
  journal      = {Discret. Comput. Geom.},
  volume       = {58},
  number       = {4},
  pages        = {785--821},
  year         = {2017},
  _url          = {https://doi.org/10.1007/s00454-017-9918-3},
  doi          = {10.1007/S00454-017-9918-3},
  timestamp    = {Sat, 30 Sep 2023 10:11:35 +0200},
  biburl       = {https://dblp.org/rec/journals/dcg/AkitayaAET17.bib},
  bibsource    = {dblp computer science bibliography, https://dblp.org}
}

@article{hatcher1980presentation,
  title={A presentation for the mapping class group of a closed orientable surface},
  author={Hatcher, Allen and Thurston, William},
  journal={Topology},
  volume={19},
  number={3},
  pages={221--237},
  year={1980},
  publisher={Pergamon},
  doi={10.1016/0040-9383(80)90009-9}
}

@article{lackenby2024bounds,
  author = {Lackenby, Marc and Yazdi, Mehdi},
  title = {Bounds for the number of moves between pants decompositions, and between triangulations},
  journal = {Journal of Topology and Analysis},
  pages = {1--68},
  year = {2026},
  doi = {10.1142/S1793525326500056},
}

\newpage

\appendix

\section{Omitted Material from \cref{sec:torus}}
\label{app:torus}


\lemmakingpplane*

\begin{proof}[Proof of Reconfiguration Sequence Length and Runtime]%
    \label{Plemmakingpplane}

    Recall from the proof of \cref{lem:making-p-plane} in \cref{sec:torus} that $\mathcal{F}_j$ is the frame tree after $j$ induction steps.
    Let $\mathcal{B}_j$ denote the current embedding of the forest $F$ after $j$ reconfiguration steps (with $\mathcal{B}_k = \mathcal{B}^*$).
    We begin by bounding the size of ${\mathcal B}_j$ (in particular ${\mathcal B}^* = \mathcal{B}_k$, corresponding to $\mathcal{F}^*=\mathcal{F}_k$) and the length of the reconfiguration sequence.
    We first analyze $\mathcal F_j$, the number of its intersections $k(\mathcal{F}_j)$ with the boundary $\partial\bar{\Sigma}$ and the number of its segments $s(\mathcal{F}_j)$.
    Initially, $k({\mathcal F_0}) = k$ and $s({\mathcal F_0}) \in O(s)$.
    Thus, there are $k$ induction steps, indexed by $j\in\{1,\ldots,k\}$. Subdividing each segment $s_i$ at $a_i$ and $b_i$, for $i\in\{1,\ldots,k\}$, increases the number of segments by $2k$, and in each of the $k$ inductive steps, we add at most $3$ additional segments while losing the segment $f_j$.
    Hence, since $k \le s$, all $\mathcal F_j$, including ${\mathcal F}^*$, have $O(s)$ segments.

    Initially, there is at most one edge segment of $\mathcal{B} $ in the neighborhood of each segment of $\mathcal{F}_{0}$.
    Each step at most triples the number of segments of $\mathcal{B}_j$ (compared to the number for $\mathcal{B}_{j-1}$ in the neighborhood of a segment of the frame tree  $\mathcal{F}_j$ (because the curve $\rho_j$ hugs both sides of each edge of the frame tree).
    In fact, we can refine the polyline representations with this amount of segments: To see this,  consider a very small circle at each vertex of the frame tree.
    Between any two adjacent vertices $u,v$ of the frame tree, edge curves are straight segments between the corresponding two circles, going parallel to each other in a very tiny neighborhood of the straight-line segment $uv$.
    Inside each circle, the corresponding pairs of ``raw ends'' (lying on the circle) are connected by straight-line segments.
    Thus, in the $j$-th step, at most $3^j$ segments of $\mathcal{B}_j$ are reconfigured.
    Summation over $j = 1, \ldots, k$ gives $O(3^k)$ reconfiguration steps in total.
    The final ${\mathcal B}^*$ has $O(3^k)$ segments in the neighborhood of each of the $O(s)$ segments of ${\mathcal F}^*$ for a total of $O(3^k s)$ segments.

    Note that the proof provides an algorithm.
    If we output an explicit list of each embedded graph in the reconfiguration sequence, then the output size is $O(3^{2k} s)$, which dominates the runtime.
    If we instead output just the reconfigured portion of each edge curve after each reconfiguration step, then the bound is $O(3^k s)$.
\end{proof}


\reconfigurationtwoplaneembeddingstree*

\begin{proof}[Proof of Reconfiguration Sequence Length and Runtime]
    \label{Preconfigurationtwoplaneembeddingstree}

    In the remainder of this section, we describe and analyze a more efficient approach to the main algorithm that reuses the same $\rho$ in each step.
    We must specify more precisely what happens to the reconfigurations performed in Step 1b; some of these must be undone but most can remain.

    We first classify all the reconfiguration steps.
    When we eliminate a crossing between an edge curve of $\mathcal B^*$ and $d_x$ in Step 1b, we call this a \defn{top loop detour}.
    Similarly, a \defn{bottom loop detour} eliminates a crossing between an edge curve of $\mathcal B^*$ and $d_y$.
    Reconfigurations performed during Step 2 are \defn{tree detours}.
    Each tree detour eliminates one crossing between the initial $\mathcal{B}^*$ and ${\mathcal R}^*$ so the total number is at most $\chi$.
    The only other reconfigurations performed by the algorithm are the ones from Step 1c ($b$ to $d$) and Step 3 ($d$ to $r$), so their total number is at most $2|E|$ which is in $O(s^*)$.
    We describe how to reduce the number of top and bottom loop detours by reusing $\rho$ and refining the order in which edges are fixed.
    Order the edges of the tree according to the last visit along the boundary $\rho_0$ of the $\varepsilon$-neighborhood of ${\mathcal R}^*$ going clockwise starting at the root.  Let $xy$ and $x' y'$ be two successive edges in this ordering and let $d$ and $d'$ be the corresponding desire paths constructed from~$\rho$.
    Let $\rho_x$ be the point where desire path $d$ joins $x$ to $\rho$, and let $\rho_y$ be the point where desire path $d$ joins $y$ to $\rho$.
    Define $\rho_{x'}$ and $\rho_{y'}$ similarly.
    Either $x' y'$ is the parent edge of $y$, or $x'y'$ lies in the subtree rooted at $y$ (possibly with $y'=y$).
    In the first case $\rho_{y'}$ comes before $\rho_y$ along $\rho_0$, and in the second case $\rho_{y'}$ comes after (or is equal to) $\rho_y$.
    In either case $\rho_{x'}$ comes after $\rho_x$.

    We first consider top loop detours.
    Let $x^0 y^0$ be the first tree edge in the ordering and let~$d^{\,0}$ be the corresponding desire path.
    We perform all the top loop detours for $d^{\,0}$.
    Consider what happens when the algorithm progresses from edge $xy$ to the next edge $x' y'$.
    Because  $\rho_{x'}$ comes after $\rho_x$, all the top loop detours for $d'$ have already been performed.
    However, the top loop detours corresponding to crossings of $\rho$ between $\rho_{x'}$ and $\rho_{x}$ travel through the gap between $x'$ and $\rho_{x'}$ so we must revert those top loop detours in order to join $x'$ to $\rho_{x'}$.
    We perform the reversions in order from $\rho_{x}$ to $\rho_{x'}$.

    The total number of reconfiguration steps to perform top loop detours and reversions is~$O(t)$, where $t$ is
    the number of top loop detours performed for the first edge $x^0 y^0$.
    Thus~$t$ is bounded by the number of crossings between the original $\mathcal B^*$ and $\rho$.
    As noted in Step 1a, we can choose $\rho$ so that apart from $O(1)$ \enquote{free space} segments all of its segments
    travel alongside edge curves of ${\mathcal R}^*$, with each segment of ${\mathcal R}^*$ traversed $O(1)$ times.
    There are $O(s^*)$ crossings between $\mathcal B^*$ and the free space segments of $\rho$.
    Any crossing between $\mathcal B^*$ and the other segments of $\rho$ is due either to a crossing between $\mathcal B^*$ and ${\mathcal R}^*$ or to
    an incidence between an edge curve of
    $\mathcal B^*$ and a vertex.
    Thus $t$ is in $O(\chi + s^*)$.

    We now consider bottom loop detours.
    When $y'$ is the parent of $y$, there are no new bottom loop detours and we revert the bottom loop detours corresponding to crossings of $\rho$ along $r$, which are then eliminated once and for all by the tree detours in Step 2.
    Otherwise (if $y'$ is not the parent of $y$),
    we perform new bottom loop detours for crossings of $\rho$ between $\rho_y$ and $\rho_{y'}$.
    The total number of reconfiguration steps to perform bottom loop detours and reversions is bounded by the number of crossings
    between the original
    $\mathcal B^*$ and $\rho$, which, as argued above, is in
    $O(\chi + s^*)$.
    Taking into account all the reconfiguration steps, the length of the reconfiguration sequence from $\mathcal B^*$ to $\mathcal R^*$ is then $O(\chi + s^*)$.
    Finally, as noted above, $\rho$ is a polyline consisting of
    $O(s({\mathcal R}^*))\subseteq O(s^*)$ straight-line segments, and every detour uses a portion of $\rho$, so the same bound holds for each reconfigured edge curve.
    Thus each embedded graph in the reconfiguration sequence has $O(s^*(\chi+s^*))$
    segments.
    If we output an explicit list of each embedded graph in the reconfiguration sequence, then the output size is
    $O(s^*(\chi + s^*)^2)$.
    This dominates the runtime of the Phase 2 algorithm.
\end{proof}

\forestlemma*
\label{Pforestlemma}

\begin{proof}
    We will reduce to the case where $F$ is a tree so as to be able to apply \cref{lem:reconfiguration-of-two-plane-embeddings-tree}.
    Hence, as a preliminary step, we add edges to the forest $F$ to create a tree $T$,
    and augment $\mathcal R^*$ to an embedding ${\mathcal R}^A$ of tree~$T$.
    After that we can apply \cref{lem:reconfiguration-of-two-plane-embeddings-tree}.

    Subdivide the edge curves of $\mathcal R^*$ at their bends and regard their union, together with the isolated vertices of $F$, as a graph $P$ embedded with a plane straight-line embedding in the interior of $\bar\Sigma$.
    Extend $P$ to a triangulation by adding edges.
    Since the triangulation is connected, it contains a set $A$ of $c-1$ added edges that connect the $c$ components of $P$ into a tree.
    Process $A$ in any order.
    For $a=pq\in A$, let $u=p$ if $p$ is a vertex of $F$; otherwise, $p$ lies on an original edge curve $\mathcal R^*(e)$, and we choose an endpoint $u$ of $e$.
    Define $v$ analogously from $q$.
    Add the edge $e_a=uv$ to the current forest and draw it by following closely along the relevant original edge curve from $u$ to $p$ when necessary, then following $a$, and proceeding analogously from $q$ to $v$ along the corresponding original edge curve.
    The routes alongside existing edges can be chosen consistently so that the augmenting curves are pairwise disjoint.
    Since each edge of $A$ joins two current components, after processing $A$ we obtain a tree $T$ and an augmented embedding $\mathcal R^A$.
    Note that we do not explicitly construct an augmentation of $\mathcal{B}^*$ as the edges augmenting $F$ to $T$ are not actually part of the input (that is, $\mathcal{B}^*$ is a partial $\bar\Sigma$-embedding of $T$). That is, we can ignore them when we must reroute edges of $\mathcal{B}^*$ when applying \cref{lem:reconfiguration-of-two-plane-embeddings-tree}.
    On the other hand, we can make use of the $\mathcal{R}^A$-embedding of $T$ to construct the desire paths as discussed in the proof of \cref{lem:reconfiguration-of-two-plane-embeddings-tree}.
    It remains to investigate how the augmentation affects the runtime of \cref{lem:reconfiguration-of-two-plane-embeddings-tree}.

    We added $c-1$ edges to $F$. Each augmenting edge curve follows portions of at most two edge curves of $\mathcal R^*$ and one triangulating edge in $A$, hence it has $O(s({\mathcal R}^*))$ segments. Thus $s({\mathcal R}^A)$ is in $O(c s^*)$.
    For each augmenting edge curve,
    all but one of its segments follow segments of
    ${\mathcal R}^*$;
    the remaining segment runs alongside a triangulating edge in $A$.
    Thus the number of crossings between $\mathcal B^*$ and ${\mathcal R}^A$,
    which we denote by $\chi^A$, is in $O(c ( \chi + s^*))$.
    Applying \cref{lem:reconfiguration-of-two-plane-embeddings-tree} to $\mathcal B^*$ and $\mathcal R^A$ gives a sequence of length $O(\chi^A+s(\mathcal B^*)+s(\mathcal R^A))$, which is
    $O(c ( \chi + s^*))$.
    Finally, as noted above, $\rho$ is a polyline consisting of
    $O(s({\mathcal R}^A)) \subseteq O(c s^*)$ straight-line segments, and every detour uses a portion of $\rho$, so the same bound holds for each reconfigured edge curve.
    Thus, each embedded graph in the reconfiguration sequence has $O(c^2 s^*(\chi + s^*))$ segments.
    If we output an explicit list of each embedded graph in the reconfiguration sequence, then the output size is $O(c^3 s^*(\chi + s^*)^2)$.
\end{proof}


\corhighergenus*

\begin{proof}
    \label{Pcorhighergenus}
    It is easy to verify that the correctness of all techniques in this section does not depend on the fact that the fundamental polygon is of genus $1$.
    In Phase 1 (cf. \cref{lem:making-p-plane}), we define shortcuts $\gamma$ only within the region bounded by $\partial\bar\Sigma$.
    This can be done the same way if $\bar\Sigma$ is of genus $g > 1$.
    On the other hand, then the detour path along the boundary of the fundamental polygon might consist of up to $2g+3$ segments.
    Thus, the final frame tree $\cal F^*$ has at most $(2g+3)k=O(g\cdot s)$ segments.
    Thus, $\cal B^*$ and $\cal R^*$ have $O(3^k\cdot g\cdot s)$ segments and a reconfiguration sequence can be computed in $O(3^{2k} \cdot g\cdot s)$ time.
    In Phase 2, we first provided an algorithm to reroute two embeddings of a tree (cf. \cref{lem:reconfiguration-of-two-plane-embeddings-tree}). Here, we can choose two arbitrary sides $S_1$ and $S_2$ of the boundary of $\bar \Sigma$ such that we let the desire curves $d$ intersect with side $S_1$ and the rerouted blue segments intersect with side $S_2$.
    Finally, we also provided an augmentation from the forest to the tree case (cf. \cref{lem:reconfiguration-of-two-plane-embeddings}).
    Here, we augment along noncrossing triangulating edges that connect the components of the planar straight-line graph induced by $\mathcal R^*$ in the interior of $\bar\Sigma$.
    Since this construction takes place entirely in the interior of the fundamental polygon, this works independently of the genus $g \geq 1$.
    Thus, the analysis of Phase 2 remains unchanged.
    However, because of the bound from Phase 1, we now have $s^* = O(3^k\cdot g\cdot s)$ and $\chi = O(3^k\cdot g^2s^2)$.
    Thus, Phase 2 creates a reconfiguration sequence of length $O(3^k\cdot g^2\cdot s^2)$, each embedded graph in the sequence has at most $O(3^k \cdot g^3\cdot s^3)$ segments, and there is an algorithm computing the sequence in $O(3^k\cdot g^5\cdot s^5)$ time.
    Thus, all techniques also work for any fundamental polygon at the cost of a runtime that depends on $g$.
\end{proof}


\section{Omitted Material from \cref{ssec:embedding-preserving}}
\label{app:embedding-preserving}

\thmembeddingpreserving*

\label{Pthmembeddingpreserving}

\subparagraph{Preliminaries.}
We need some preparation. We would like to extend the reconfiguration sequence of a spanning tree (\Cref{sec:torus}) to \enquote{bundles} of edges. In general, we can treat the embedding of a spanning tree of $G$ similarly to the frame tree in \Cref{sec:torus}. However, for a clear comparison between the embeddings $\mathcal{B}(T)$ and $\mathcal{B}(G)$, we use the machinery developed by Chang, Erickson, and Xu~\cite{ChangEX15} (see also~\cite{AkitayaAET17}); see \Cref{fig:stripsystem}.

\begin{figure}[htbp]
    \centering
    \includegraphics[width=\textwidth]{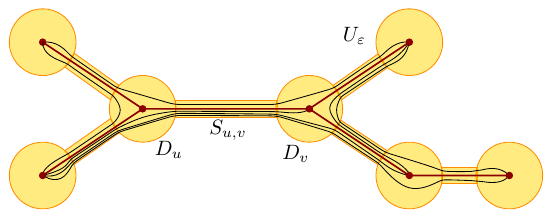}
    \caption{An $\varepsilon$-strip system of a tree embedded in the plane.}
    \label{fig:stripsystem}
\end{figure}

Let $T=(V,E(T))$ be a spanning tree (\emph{frame tree}), $\mathcal{F}$ a $\Sigma$-embedding of $T$, and let $\varepsilon>0$ be a sufficiently small real number.
An \emph{$\varepsilon$-strip system} for $\mathcal{F}$ is a decomposition of
a neighborhood of $\mathcal{F}$ into the following disks and strips.
\begin{itemize}
    \item For every vertex $v\in V$, let $D_v\subset \Sigma$ denote the disk of radius $\varepsilon$ centered at the point $\mathcal{B}(v)\in \Sigma$.
    \item For every edge $uv\in E(T)$, let $S_{uv}$ denote the strip of points with distance at most $\varepsilon^2$ from $\mathcal{B}(uv)$ that do not lie in the interior of $D_u$ or $D_v$.
\end{itemize}
The circular arcs $A_{u,v} = S_{uv}\cap D_u$ and $A_{v,u} = S_{uv}\cap D_{v}$ are called the ends of $S_{uv}$. We assume $\varepsilon>0$ is sufficiently
small that these disks and strips are pairwise disjoint except that each strip intersects exactly two disks at its ends. Finally, let $U_\varepsilon \subset \Sigma$ denote the union of all these disks and strips, which is homeomorphic to a disk since $\mathcal{F}$ is contractible.

An embedding $\mathcal{B}$ of $G=(V,E)$ is \emph{in the $\varepsilon$-strip system of $\mathcal{F}$} if, for every edge $e\in E$,
\begin{itemize}
    \item $\mathcal{B}(e)\subseteq U_\varepsilon$, and
    \item every connected component of $\mathcal{B}(e)\cap S_{uv}$ is a simple curve between the ends $A_{u,v}$ and $A_{v,u}$.
\end{itemize}

\subparagraph{Detailed Description.}
We first show that a connected planar graph can be reconfigured to a neighborhood of a spanning tree (which is a topological disk $D$), and we can even specify the outer face (w.r.t.\ the disk $D$).

\begin{lemma}\label{lem:hug-a-tree}
    Let $\mathcal{B}$ be a $\Sigma$-embedding of a connected planar graph $G = (V,E)$ on an orientable surface $\Sigma$ of genus $g\geq 1$ so that $\mathcal{B}$ has the same rotation system as a plane embedding $\mathcal{E}$ of $G$.
    Let $f$ be a face of the plane embedding $\mathcal{E}$, and let $T$ be a spanning tree of $G$.
    Then $\mathcal{B}$ can be reconfigured to an embedding $\mathcal{B}'$ in
    an $\varepsilon$-strip system of $\mathcal{B}(T)$ so that the outer face of $\mathcal{B}'$ corresponds to $f$, and the rotation system remains the same throughout the reconfiguration sequence.
\end{lemma}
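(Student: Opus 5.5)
The plan is to keep the tree edges $\mathcal{B}(T)$ fixed throughout and to move only the non-tree edges $E\setminus E(T)$, one at a time, into the strip system $U_\varepsilon$ of $\mathcal{B}(T)$. Since $U_\varepsilon$ is a disk, the final embedding is then a plane embedding inside a disk. Because tree edges never move, the rotation system at each vertex changes only through the positions of non-tree edge ends. We therefore insist that every rerouted non-tree edge leaves each endpoint $v$ in the same angular sector between consecutive tree edges at $v$ (and in the same relative order among non-tree edges in that sector) as in $\mathcal{B}$. Under this requirement the rotation system is invariant.

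First we choose the target combinatorics. In the plane embedding $\mathcal{E}$, with $f$ as the outer face, cut the plane along $\mathcal{E}(T)$. Each non-tree edge $e=uv$ together with the tree path $T[u,v]$ forms a cycle. The side of this cycle not containing $f$ determines on which side of $T[u,v]$ the edge $e$ runs. This gives a word in the ends $A_{\cdot,\cdot}$ of the strips, describing how $e$ walks around $T$: it travels along the strips of $T[u,v]$ on a prescribed side, and crosses disks $D_w$ between the appropriate tree-edge ends. The planarity of $\mathcal{E}$ guarantees that these prescribed curves $e^\circ$ can be drawn pairwise disjoint inside $U_\varepsilon$ (nesting by the usual laminar structure of fundamental cycles). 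It also guarantees that the outer face of the resulting embedding $\mathcal{B}'$ corresponds to $f$, and that its rotation system agrees with $\mathcal{E}$, hence with $\mathcal{B}$.

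Next comes the reconfiguration. We process non-tree edges in order from innermost to outermost fundamental cycles (with respect to $f$). To move $e$ to $e^\circ$, the curve $e^\circ$ must be free of crossings with the current edges. Its only crossings are with non-tree edges not yet processed, since tree edges are avoided by construction and processed edges are nested disjointly. The crossings arise where those unprocessed curves enter $U_\varepsilon$. We eliminate them exactly as in Step~2 of \cref{lem:reconfiguration-of-two-plane-embeddings-tree}: a crossing piece of an unprocessed edge $h$ is rerouted to hug around $\mathcal{B}(T)$ together with the already-placed bundle, which is a disk. This is possible because $\Sigma\setminus U_\varepsilon$ is connected and has genus $g\ge 1$, and because the complement of a disk on an orientable surface is connected. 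If necessary, the detour uses a non-separating loop of $\Sigma\setminus U_\varepsilon$ as in Step~1b. Such detours occur away from the small disks $D_w$, so the ends of $h$ at its endpoints, and therefore the rotation system, are unchanged. When no crossings remain, we replace $e$ by $e^\circ$ in one move.

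The main obstacle is the claim that $e$ can be replaced by $e^\circ$ without changing its exit sectors, combined with the invariant that the rotation system is preserved. The endpoints of $e^\circ$ must lie in the sectors determined by $\mathcal{E}$, while the old curve $\mathcal{B}(e)$ leaves in those same sectors because the rotation systems agree. Thus a single move suffices, provided all crossings with $e^\circ$ were removed. I would check the key point by a local argument in each $D_w$: near $w$, the current drawing coincides with the rotation of $\mathcal{B}$, so $e^\circ$ can enter $D_w$ in the correct sector, between the correct neighbouring already-placed non-tree curves. The one remaining subtlety is a detour of an unprocessed edge that would have to pass through some $D_w$. To handle it, I would first shrink all unprocessed edges near each vertex to short radial stubs inside $D_w$ and keep all detours outside $\bigcup_w D_w$.
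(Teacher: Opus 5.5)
There is a genuine gap. Your framework matches the paper's: keep $\mathcal{B}(T)$ fixed, move the non-tree edges one at a time from innermost to outermost fundamental cycle with respect to $f$, and route each $e$ inside $U_\varepsilon$ along the side of $T[u,v]$ corresponding to the interior of $C_e$. The gap is in the step that makes the single move of $e$ to $e^\circ$ legal.

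You assume $e^\circ$ may be crossed by unprocessed non-tree edges, and you propose to remove those crossings by detours as in Step~2 of \cref{lem:reconfiguration-of-two-plane-embeddings-tree}. That mechanism fails exactly when it would be needed. For small $\varepsilon$, an unprocessed edge $h$ meets $U_\varepsilon$ only in its stub inside $D_w$ at an endpoint $w$. So $e^\circ$ can cross $h$ only if that stub lies in an angular gap at $w$ that $e^\circ$ sweeps through. Then $h$ leaves $w$ into the disk $\Delta\subset U_\varepsilon$ bounded by the closed curve $e^\circ\cup\mathcal{B}(T[u,v])$. If its other endpoint lies outside $\Delta$, every curve representing $h$ must cross $\partial\Delta$, hence $e^\circ$. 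No detour around the bundle or through a handle can avoid this, and shortened radial stubs do not help either, because $e^\circ$ separates $w$ from $\partial U_\varepsilon$ inside that gap.

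The analogy with Step~2 breaks down here. There the detour circumvents a fixed subtree containing no endpoint of a non-fixed edge other than $x$. Your bundle contains every vertex of $G$, so every unprocessed edge is attached to it. Your remark that detours stay outside $\bigcup_w D_w$ is also inconsistent with the fact that all potential crossings lie inside these disks.

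The missing idea is that such an $h$ never exists, so no other edge ever has to move. The paper makes this precise with the bijection $\varphi:\partial U_\varepsilon\to\partial U_\delta$, induced by the common rotation system. It matches the points where edges of $\mathcal{B}$ leave the neighborhood of $\mathcal{B}(T)$ with the points where edges of $\mathcal{E}$ leave the neighborhood of $\mathcal{E}(T)$. Your $e^\circ$ runs along the arc $\gamma$ of $\partial U_\varepsilon$ between the two ends of $\mathcal{B}(e)$ with $\varphi(\gamma)$ inside $\mathcal{E}(C_e)$. Any edge $h$ crossing $\gamma$ corresponds to an edge of $\mathcal{E}$ that starts inside $\mathcal{E}(C_e)$, and hence lies inside it. Its fundamental cycle therefore encloses strictly fewer bounded faces, so $h$ has smaller rank and already lies in the interior of $U_\varepsilon$.

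With this observation, each non-tree edge is rerouted in exactly one move along $\gamma$ and the rotation system is trivially preserved; your detour machinery becomes unnecessary. Without it, your argument does not show that the move of $e$ is possible.
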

\begin{proof}
    Let $\varepsilon>0$ be sufficiently small so that
    $\mathcal{B}(T)$ admits an $\varepsilon$-strip system, $U_\varepsilon$ intersects an edge segment of $\mathcal{B}$ if and only if it is part of $\mathcal{B}(T)$ or incident to a vertex in $V$; and its intersection with any such edge segment is connected.
    Recall that $U_\varepsilon$ is a neighborhood of $\mathcal{B}(T)$
    which contains the disks and strips of the $\varepsilon$-strip system.
    In particular, $U_\varepsilon$ is homeomorphic to a disk.
    Similarly, we can define a $\delta$-strip system for $\mathcal{E}(T)$ in the plane, and let $U_\delta$ be a neighborhood of $\mathcal{E}(T)$.
    Since $\mathcal{B}$ and $\mathcal{E}$ have the same rotation system, there is a bijection $\varphi:\partial U_\varepsilon\to \partial U_\delta$
    that maps the crossings $\mathcal{B}(e)\cap \partial U_\varepsilon$ to the crossings $\mathcal{E}(e)\cap \partial U_\delta$ for all $e\in E$.

    We may assume, without loss of generality, that $f$ is the outer face in $\mathcal{E}$. Every edge $e\in E$ that is not in $T$ determines a unique cycle $C_e$ in $T\cup\{e\}$, and its embedding $\mathcal{E}(C_e)$ is a Jordan curve that encloses one or more bounded faces in $\mathcal{E}$. For every $e\in E\setminus E(T)$, let ${\rm rank}(e)$ be the number of bounded faces of $\mathcal{E}$ enclosed by $\mathcal{E}(C_e)$. (Note, however, that $\mathcal{B}(C_e)$ is not necessarily a separating cycle in $\Sigma$.)

    \begin{figure}[htbp]
        \centering
        \includegraphics[width=\textwidth]{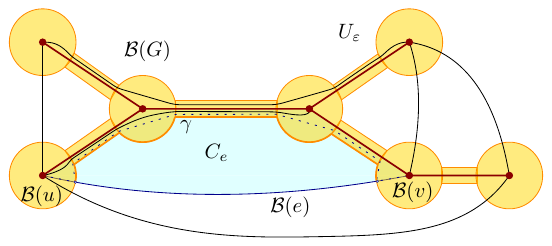}
        \caption{Rerouting an edge $e\in E\setminus E(T)$ to an edge curve in the $\varepsilon$-strip system of $T$.}
        \label{fig:hug-a-tree}
    \end{figure}

    We describe an algorithm that reconfigures $\mathcal{B}$ to an embedding $\mathcal{B}'$ contained in $U_\varepsilon$ such that its outer face w.r.t.\ $U_\varepsilon$ corresponds to face $f$ (of the plane embedding $\mathcal{E}$). We reconfigure every edge $e\in E\setminus E(T)$ in increasing order by rank (ties are broken arbitrarily).

    Assume that we have already rerouted some of the edges, and $e\in E\setminus E(T)$ is the next edge to handle; see \Cref{fig:hug-a-tree}.
    We may assume, without loss of generality, that $\mathcal{B}$ is the current embedding, and $\mathcal{B}$ embeds all edges of rank less than ${\rm rank}(e)$ in the interior of $U_\varepsilon$.
    For every edge $e'\in E\setminus E(T)$, if $\mathcal{E}(e')$
    lies in the interior of $\mathcal{E}(C_e)$, then ${\rm rank}(e')<{\rm rank}(e)$, and by assumption $\mathcal{B}(e')$ is already in the interior of $U_\varepsilon$.
    By the choice of $\varepsilon>0$, $\mathcal{B}(e)$ intersects $\partial U_\varepsilon$ in two points, which partition the closed curve $\partial U_\varepsilon$ into two arcs. Let $\gamma$ be the arc of $\partial U_\varepsilon$ for which $\varphi(\gamma)$ is in the interior of $\mathcal{E}(C_e)$. Note that $\gamma$ cannot cross any edge curve of $\mathcal{B}$, since $\varphi$ would map any crossing $\mathcal{B}(e')\cap \gamma$ to the interior of $\mathcal{E}(C_e)$, which would imply that ${\rm rank}(e')<{\rm rank}(e)$, and $\mathcal{B}(e')$ lies in the interior of $U_\varepsilon$.

    Consequently, we can now reroute $e$ in the $\varepsilon$-strip system closely following the arc $\gamma$ between the two intersection points in $\mathcal{B}(e)\cap \partial U_\varepsilon$. Clearly, by rerouting edge $e$, we did not change the rotation system.

    After successively rerouting all edges in $E\setminus E(T)$, let $\mathcal{B}'$ denote the resulting $\Sigma$-embedding of $G$.
    Note that all bounded faces of $\mathcal{B}'$ lie in the interior of $U_\varepsilon$.
\end{proof}

If $G$ is a tree, we cannot just apply \cref{cor:highergenus} because \cref{cor:highergenus} does not make any assumptions about the rotation system of $\mathcal{B}$ and $\mathcal{R}$.
In fact, some of the steps in the reconfiguration sequence constructed in \Cref{sec:torus} may change the rotation system.
Since the rotation system of an embedding is determined by small neighborhoods of the vertices, we show how to reconfigure $\mathcal{B}$ to $\mathcal{R}$ restricted to such neighborhoods.

\begin{lemma}\label{lem:locally}
    Let $\cal B$ and $\cal R$ be two embeddings of a graph $G=(V,E)$ on an orientable surface $\Sigma$ such that $\cal B$ and $\cal R$ have the same rotation system.
    Then we can reconfigure $\mathcal{B}$ and $\mathcal{R}$ to $\Sigma$-embeddings $\mathcal{B}'$ and $\mathcal{R}'$, resp., so that every vertex $v\in V$ has a neighborhood $N_v$ where $\mathcal{B}'\cap N_v= \mathcal{R}'\cap N_v$.
    Furthermore, the rotation system remains the same throughout the reconfiguration sequence.
\end{lemma}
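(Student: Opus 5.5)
We need $\mathcal{B}'$ and $\mathcal{R}'$ to agree inside a small disk $N_v$ around each vertex $v$. The plan is to leave $\mathcal{R}$ unchanged ($\mathcal{R}'=\mathcal{R}$) and modify only $\mathcal{B}$, working locally near each vertex. Since every move will happen inside a tiny disk around a vertex, the rotation system is automatically preserved provided the cyclic order of edge germs at each vertex never changes.

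Fix a vertex $v$. Choose a radius $r>0$ small enough that the disk $D_v$ of radius $r$ around $v$ has the following properties: it contains no other vertex; $\mathcal{B}\cap D_v$ consists of $\deg(v)$ radial-like arcs from $v$ to $\partial D_v$, one per incident edge; and $\mathcal{R}\cap D_v$ has the same form. Such an $r$ exists because the edge curves are polylines, so near $v$ each edge curve is a single straight segment.

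Both $\mathcal{B}$ and $\mathcal{R}$ meet $\partial D_v$ in $\deg(v)$ points, and by the common rotation system these points appear in the same cyclic order for both. Pick a smaller concentric disk $N_v\subset D_v$ of radius $r/2$. Define the target picture $\mathcal{R}\cap N_v$: radial segments from $v$ to points $q_e\in\partial N_v$. Let $p_e$ be the point where $\mathcal{B}(e)$ exits $D_v$.

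Now reroute each edge $e$ incident to $v$, one at a time. The new curve is $\mathcal{R}(e)\cap N_v$, followed by an arc in the annulus $A_v=D_v\setminus \mathrm{int}(N_v)$ from $q_e$ to $p_e$, followed by $\mathcal{B}(e)$ outside $D_v$.

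The key claim is that these annulus arcs can be chosen pairwise disjoint and disjoint from the remaining old pieces. The points $q_e$ on the inner circle and $p_e$ on the outer circle appear in the same cyclic order. Hence there is a family of pairwise disjoint arcs in $A_v$ matching $q_e$ to $p_e$, for example by twisting all of them by a common angle; orientability of $\Sigma$ ensures a consistent sense of ``same cyclic order'' on both circles.

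To perform the replacements one by one without crossings, first do a move that replaces each $\mathcal{B}(e)\cap D_v$ by its own straight radial segment to $p_e$ (if it is not already one). Then the old pieces are radial segments in $D_v$, and the new curves are radial inside $N_v$ plus a spiral in $A_v$. A common twist of all spirals is disjoint from the other edges' pieces, except possibly where the spiral of $e$ meets the radial segments of not-yet-processed edges. To avoid this, I would process the edges in the cyclic order in the direction of the twist. At the moment $e$ is processed, let $e'$ be the next edge in that order. The spiral of $e$ is confined to the sector between $e$'s old radial direction and $e'$'s, rotated so that it ends at $q_e$. If necessary, first rotate the target by choosing the $q_e$ directions with a small homeomorphism of $N_v$; the positions of the $q_e$ are at our disposal via the choice of $\mathcal{R}$ near $v$. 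This sector-by-sector rerouting is the step I expect to be the main obstacle. It is the one place where the cyclic-order equality is actually used, and it must be checked that the rotation system is unchanged after each single move.

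Disks for different vertices are disjoint, so the vertices can be handled independently. The length of the sequence is $O(|E|)$.
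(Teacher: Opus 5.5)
Your one-move-per-edge rerouting cannot in general be carried out, and your proposed fix is circular. Write $\beta_e$ and $\alpha_e$ for the directions in which $\mathcal{B}(e)$ and $\mathcal{R}(e)$ leave $v$. Consider the first rerouting move after your straightening step. Every other edge at $v$ is still a full radial segment from $v$ to $\partial D_v$, and the new curve of $e$ runs inside $D_v$ from $v$ to $p_e$. So it is trapped between the radial segments of the two neighbours $e^-,e^+$ of $e$, which forces $\alpha_e$ to lie strictly between $\beta_{e^-}$ and $\beta_{e^+}$. Your sector-by-sector step tacitly assumes an interleaving of this kind, $\beta_e<\alpha_e<\beta_{e'}$.

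This fails in simple cases. Take $d=\deg(v)\ge 3$, $\beta_i=2\pi i/d$ and $\alpha_i=\beta_i+\pi+\pi/(2d)$. The cyclic orders agree, but no $\alpha_i$ lies in its admissible sector. Hence no edge can be moved first, for any processing order and any choice of spirals. Your remedy does not help. You fixed $\mathcal{R}'=\mathcal{R}$, so the directions $\alpha_e$ are input and not ``at our disposal''. The rotation needed need not be small. And changing the germs of $\mathcal{R}$ by moves is exactly the problem the lemma is about.

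The missing idea is a common canonical configuration that can be reached by one move per edge from \emph{any} starting position. The paper takes a cone $C_v$ with apex $v$ that avoids the edges of both $\mathcal{B}$ and $\mathcal{R}$ at $v$, and nested triangles $T_1\supset\dots\supset T_d$ around $v$. The edges of $\mathcal{B}$ are processed in counterclockwise order starting at $C_v$. The $i$th one is cut at $\partial T_i$, follows $\partial T_i$ clockwise into $C_v$, and reaches $v$ along a segment $q_i$. This detour lies inside $T_{i-1}$ and only sweeps directions of edges whose pieces inside $T_{i-1}$ have already been removed. It also stops before reaching the segments $q_j$, $j<i$. So each move is crossing-free and the rotation is unchanged. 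The same is done for $\mathcal{R}$, with edges on the two sides of $C_v$ turning clockwise resp.\ counterclockwise, so both embeddings end with the same segments $q_i$ near $v$.

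Alternatively, you could keep your spiral argument, which is correct once the germs interleave. You would then first have to bring the germs of $\mathcal{B}$ into interleaving position by an explicit multi-round sequence of moves, each shifting one germ within its current free sector.
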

\begin{proof}
    Since the vertices in $V$ have disjoint neighborhoods, it is enough to prove the claim for one vertex. Let $v\in V$, and let $D_v$ be a small disk in $\Sigma$ centered at $v$ that intersects only segments of the edge curves in $\mathcal{B}$ and $\mathcal{R}$ that are incident to $v$; see~\Cref{fig:locally}. Let $C_v$ be a cone with apex $v$ that is disjoint from all segments incident to $v$ in both $\mathcal{B}$ and $\mathcal{R}$.
    (Note that the counterclockwise first edge incident to $v$ after $C_v$ may be different in $\mathcal{B}$ and $\mathcal{R}$.)

    \begin{figure}[htbp]
        \centering
        \includegraphics[width=\textwidth]{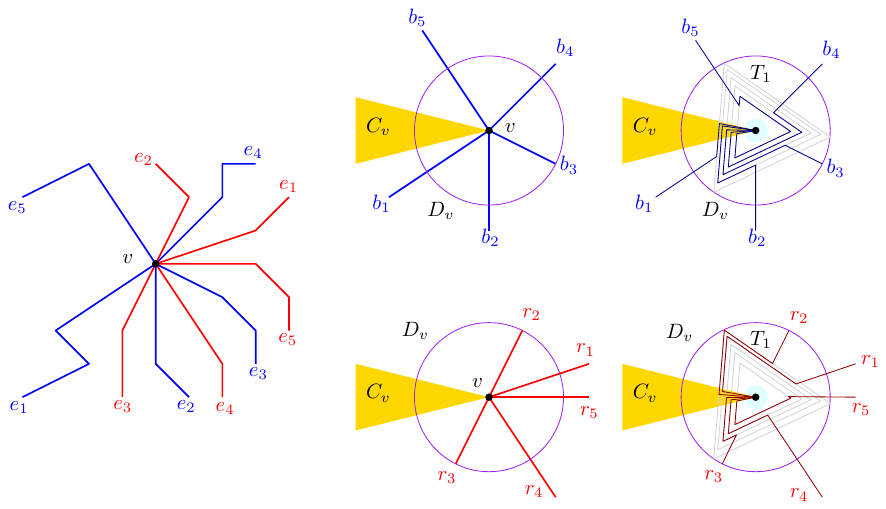}
        \caption{Left: A vertex $v$ of degree $5$ and its incident edge curves in $\mathcal{B}$ (blue) and $\mathcal{R}$ (red).
            Middle: The disk $D_v$, the cone $C_v$, and the incident segments $b_1,\ldots,b_5$ of $\mathcal B$ (top) and $r_1,\ldots,r_5$ of $\mathcal R$ (bottom), where $b_i$ and $r_i$ belong to the same edge.
            Right: After rerouting along nested triangles, the resulting embeddings $\mathcal{B}'$ and $\mathcal{R}'$ agree inside the disk $D_v$.}
        \label{fig:locally}
    \end{figure}

    Let $d=\deg(v)$ be the degree of $v$ in $G$. Create $d$ regular triangles centered at $v$ in the interior of $D_v$, and denote them by $T_1\supset \ldots \supset T_d$.
    Order the $d$ segments of $\mathcal{B}$ incident to $v$ in counterclockwise order starting from the cone $C_v$, as $b_1,\ldots , b_d$. For $i=1,\ldots,d$, redraw the portion of $b_i$ from the point $b_i\cap \partial T_i$ to $v$ as follows: follow $\partial T_i$ clockwise to a point in the interior of the cone $C_v$, and then continue to $v$ in the interior of $C_v$ along a line segment $q_i$.
    Denote by $\mathcal{B}'$ the resulting embedding of $G$.

    We follow the same process to reconfigure $\mathcal{R}$ to $\mathcal{R}'$ with a few changes. Order the $d$ segments of $\mathcal{R}$ incident to $v$ by $r_1,\ldots , r_d$
    such that $r_i$ and $b_i$ belong to the same edge $e_i\in E$ for $i=1,\ldots , d$.
    Let $r_k$ be the first segment encountered counterclockwise after the cone $C_v$.
    For $i=k,\ldots,d$, redraw the portion of $r_i$ from $r_i\cap\partial T_i$ to $v$ by following the curve $\partial T_i$ clockwise to a point in the interior of the cone $C_v$, and reach $v$ along the segment $q_i$. For $i=1,\ldots,k-1$, we follow $\partial T_{k-i}$ \emph{counterclockwise}, and reach $v$ along the segment $q_i$.

    Note that both $\mathcal{B}'(e_i)$ and $\mathcal{R}'(e_i)$ reach $v$ along the same segment $q_i$.
    Consequently, for a sufficiently small disk $N_v$ centered at $v$ inside $T_d$, both embeddings intersect $N_v$ only along the common segments $q_i$, and hence $\mathcal{B}'\cap N_v= \mathcal{R}'\cap N_v$, as required.
\end{proof}

\begin{lemma}\label{lem:reduction-to-tree}
    Let $T = (V,E)$ be a tree and let $\mathcal{B}$ and $\mathcal{R}$ be two
    $\Sigma$-embeddings of $T$ on an orientable surface $\Sigma$ of genus $g\geq 1$
    so that $\mathcal{B}$ and $\mathcal{R}$ have the same rotation system.
    Then $\mathcal{B}$ can be reconfigured to $\mathcal{R}$ so that the rotation system remains the same throughout the reconfiguration sequence.
\end{lemma}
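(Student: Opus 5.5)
First apply \Cref{lem:locally} to $\mathcal{B}$ and $\mathcal{R}$. This reconfigures them, keeping the rotation system fixed, into embeddings $\mathcal{B}'$ and $\mathcal{R}'$. Every vertex $v$ then has a small disk $N_v$ on which $\mathcal{B}'$ and $\mathcal{R}'$ coincide: both consist of the same radial segments $q_1,\dots,q_{\deg(v)}$ in the same cyclic order. Since reconfiguration sequences are reversible, it then suffices to reconfigure $\mathcal{B}'$ to $\mathcal{R}'$ by moves that never alter any edge curve inside $\bigcup_v N_v$. The rotation system is determined by the germs of the edge curves at the vertices, so it stays constant throughout.

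To get such a sequence, replace each vertex $v$ by the small disk $N_v$, viewed as an obstacle. Contract $\overline{N_v}$ to a slightly smaller disk, and treat the endpoints $q_i\cap\partial N_v$ as fixed ports. Then run the algorithm of \Cref{sec:torus} (Phases~1 and~2, in the form of \Cref{cor:highergenus}) on the embeddings restricted to $\Sigma\setminus\bigcup_v \mathrm{int}\,N_v$. In that algorithm every rerouting modifies an edge curve only along a sub-arc away from its endpoints, or replaces a whole curve $b$ by a new curve $d$ or $r$. In every case the new curve is drawn in a small neighborhood of frame trees, of $\rho$, or of fixed subtrees. I will check that each such curve can be chosen to enter the vertex $v$ exactly through its prescribed segment $q_i$. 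Steps~1c and~3 (``$b$ to $d$'', ``$d$ to $r$'') are the only places where a whole edge is replaced. There the desire path $d$ leaves $x$ alongside $r$, and $r=\mathcal{R}'(e)$ already uses $q_i$ at both ends. So $d$ can be made to start and end with the same segments $q_i$ as $\mathcal{B}'(e)$ and $\mathcal{R}'(e)$. Detours in Step~1b, Step~2 and Phase~1 go around neighborhoods of vertices or subtrees. They can be drawn outside $N_v$ by choosing the relevant $\varepsilon$ smaller than the radius of $N_v$, nesting around the disk $N_v$ rather than around the point $v$.

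The main obstacle is verifying that the Phase~2 detours, especially the tree detours around $\mathcal{R}^*_x$ and the desire path construction via $\rho_0$, respect the fixed local pictures. The neighborhood $\rho_0$ of $\mathcal{R}^*$ must pass around each $N_v$ rather than through it. Curves of $\mathcal{B}$ rerouted around a fixed subtree must not wind between consecutive segments $q_i$ at a vertex. Since the local pictures of $\mathcal{B}'$ and $\mathcal{R}'$ agree at every vertex, every edge of $\mathcal{B}'$ incident to $v$ leaves $N_v$ through the same port as in $\mathcal{R}'$. The blue curves that cross $r$ near $x$ therefore do so outside $N_x$. Hence the nested rerouting around $\mathcal{R}^*_x\cup N_x$ is consistent, and I expect no new crossings inside the $N_v$'s. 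Phase~1 is handled the same way: frame trees are built in the complement of the $N_v$'s, by thickening each $N_v$ into the frame tree as a hub. After these checks the whole sequence leaves $\bigcup_v N_v$ untouched, which proves the lemma.
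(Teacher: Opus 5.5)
Your argument is sound in outline and has the same skeleton as the paper's proof. You apply \Cref{lem:locally}, then run the algorithm of \cref{sec:torus} via \Cref{cor:highergenus}, and argue that the germs at the original vertices are never touched. Where you differ is the mechanism that secures this last point.

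The paper subdivides every edge $e=uv$ by two new vertices $u_e,v_e$, placed inside the neighborhoods where $\mathcal{B}'$ and $\mathcal{R}'$ agree. The stubs $uu_e$ thereby become edges of a tree $T'$ that are already fixed: they are identical in both embeddings and uncrossed. \Cref{cor:highergenus} is then used as a black box. Phase~1 only reroutes sub-arcs of curves that cross $\partial\bar\Sigma$. Phase~2, by its own invariant, never reroutes or crosses a fixed edge. The only vertices whose incident curves ever change are the degree-$2$ subdivision vertices, and their rotation is unique.

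You instead keep the disks $N_v$ as fat obstacle vertices with ports. You then reopen each construction ($\rho_0$, the desire path, top/bottom loop detours, tree detours, the Phase~1 frame trees) to check that it can be routed around the disks. This works, and it yields a slightly stronger invariant: the disks stay entirely untouched. In the paper, other curves may pass through the wedges at $v$, which is harmless for the rotation system. The cost is a step-by-step re-verification of the algorithm, which you leave partly at the level of ``I expect no new crossings''.

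Your ports are exactly the paper's subdivision vertices. If you declare each segment $q_i$ to be an edge of a subdivided tree, all of your ``main obstacle'' checks become automatic consequences of Phase~2's fixed-edge invariant.
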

\begin{proof}
    By \Cref{lem:locally}, we can assume that $\mathcal{B}$ and $\mathcal{R}$ not only have the same rotation system, but they are identical in an $\varepsilon$-neighborhood of every vertex $v\in V$, for a sufficiently small $\varepsilon>0$ (see the light-blue shaded disks in \cref{fig:locally}(right)).
    Subdivide each edge $e=uv\in E$ with two new vertices, $u_e$ and $v_e$, and embed them on the edge curve $\mathcal{B}(e)$ (resp., $\mathcal{R}(e)$) in the $\varepsilon$-neighborhood of $u$ and $v$.
    Denote by $T'=(V',E')$ the resulting tree, and by $\mathcal{B}'$ and $\mathcal{R}'$ the two embeddings of $T'$.
    Note that $\mathcal{B}'$ and $\mathcal{R}'$ have the same rotation system.
    Furthermore, all edges incident to the original vertices in $V$ are crossing-free.

    By \Cref{cor:highergenus}, there is a reconfiguration sequence from $\mathcal{B'}$ to $\mathcal{R}'$.
    Specifically, this sequence is constructed in two phases (\Cref{sec:torus}). Phase~1 modifies edge curves in the neighborhoods of crossings with the boundary of the fundamental polygon $\partial \bar{\Sigma}$.
    It does not modify small neighborhoods of the original vertices in $V$, and so it does not change the rotation system.
    Phase~2 considers  edges $e\in E'$ where $\mathcal{B}'(e)\neq \mathcal{R}'(e)$, in a leaf-to-root order after choosing an arbitrary root.
    In each iteration, it modifies edge curves in $\mathcal{B}'$ and $\mathcal{R}'$ in the neighborhoods of red-blue crossings, and ultimately reroutes $\mathcal{B}'(e)$ to $\mathcal{R}'(e)$.
    As such, the rotation system of the original vertices in $V$ remains the same throughout the reconfiguration sequence.
    Note that Phase~2 never modifies edge curves that are already fixed (that is, $\mathcal{B}'(f)=\mathcal{R}'(f)$), and so it only reroutes edges between subdivision vertices.
    A vertex of degree 2 has only one possible rotation, and consequently, Phase~2 also maintains the same rotation system.
\end{proof}

If $G$ is disconnected, it is not obvious how to augment it to a connected planar graph so that the augmented graph has the same rotation system in both embeddings.
We can now show that if all connected components are already in small neighborhoods of their spanning trees, with a consistent outer face, then the augmentation is fairly easy.

\begin{lemma}\label{lem:augmentation}
    Let $\cal B$ and $\cal R$  be two $\Sigma$-embeddings of a disconnected planar graph $G$ on an orientable surface $\Sigma$ of genus $g\geq 1$ so that $\cal B$ and $\cal R$ have the same rotation system, all connected components of $G$ are incident to a common face $\mathcal{B}(f)$ and $\mathcal{R}(f)$, resp., and every connected component of $G$ has the same facial walk in $\mathcal{B}(f)$ and $\mathcal{R}(f)$.
    Then we can augment $G$ to a connected planar graph $G'$, and extend the embeddings $\mathcal{B}$ and $\mathcal{R}$ to embeddings  $\mathcal{B}'$ and $\mathcal{R}'$ of $G'$ such that $\mathcal{B}'$ and $\mathcal{R}'$ have the same rotation system
\end{lemma}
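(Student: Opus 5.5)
The plan is to augment by adding, one at a time, new edges inside the common face $f$, drawn in $\mathcal{B}$ and in $\mathcal{R}$ so that their ends enter the vertices at corresponding angular positions.

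Let $G_1,\dots,G_c$ be the connected components of $G$. By hypothesis, each $G_i$ has a boundary walk $W_i$ on the face $f$ that is the same in $\mathcal{B}$ and in $\mathcal{R}$. In particular, the \emph{corners} of $f$ along $W_i$ agree in both embeddings. A corner is a vertex $v$ together with a pair of consecutive edges in the rotation at $v$ between which $f$ lies. Since the rotation systems agree, each corner of $W_i$ is described by the same pair of consecutive edges in both embeddings.

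For each $i\ge 2$, fix one corner $\kappa_i=(v_i; e_i, e_i')$ of $W_i$, and one corner $\kappa_1=(v_1;e_1,e_1')$ of $W_1$. For a component consisting of a single vertex, the corner is the whole neighborhood of that vertex. Add the star edges $v_1v_i$ for $i=2,\dots,c$, giving $G'$. Then $G'$ is connected. It is also planar: in a plane embedding of $G$ realizing the common rotation system with all components on a common face (obtained by placing the components side by side in the plane), these edges can be drawn inside that face.

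The drawing in $\mathcal{B}$ proceeds as follows. The face $\mathcal{B}(f)$ is connected, so for each $i$ there is a simple arc inside $\mathcal{B}(f)$ from the corner $\kappa_1$ at $v_1$ to the corner $\kappa_i$ at $v_i$. These arcs have to be made pairwise disjoint. To do this, first draw one tree of arcs inside $f$: pick a point $o$ in the interior of $\mathcal{B}(f)$ and arcs from $o$ to each corner $\kappa_i$ ($i\ge 1$). Since $\mathcal{B}(f)$ minus finitely many disjoint arcs from a point stays connected, the arcs can be chosen disjoint except at $o$. Then each edge $v_1v_i$ is drawn by running alongside the arc $o\kappa_1$ and then alongside $o\kappa_i$, using parallel copies. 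This reproduces the paper's strategy of routing augmenting curves closely along existing curves, as in \cref{lem:reconfiguration-of-two-plane-embeddings}.

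The order in which these parallel copies leave $\kappa_1$, and their order around $o$, should be fixed by a single combinatorial rule. Take the cyclic order of the arcs around $o$ to be $\kappa_1,\kappa_2,\dots,\kappa_c$. Then the copies $v_1v_2,\dots,v_1v_c$ enter $v_1$ inside the corner $\kappa_1$ in the order $2,\dots,c$. This uses orientability: the local clockwise orientation at $o$ and at $v_1$ is consistently defined along the arcs.

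Do the same in $\mathcal{R}$ with the same cyclic order around $o'$. The resulting embeddings $\mathcal{B}'$ and $\mathcal{R}'$ then have the same rotation system. At each $v_i$ with $i\ge2$, the single new edge is inserted into the same corner. At $v_1$, the new edges are inserted into the same corner in the same order. All other rotations are unchanged.

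The main obstacle is making the order at $v_1$ genuinely agree. The issue is whether the order in which the bundle leaves $\kappa_1$ is determined by the cyclic order at $o$ independently of the surface topology of $f$, since $f$ need not be a disk in $\Sigma$. I would handle this by working inside a small regular neighborhood of the tree of arcs. That neighborhood is a disk on the orientable surface, so within it the parallel copies behave exactly as in the plane, and the order at $v_1$ is the reverse of (or equal to) the order at $o$ in both embeddings alike. A further check is that every new edge is disjoint from every edge of $G$: the arcs lie in the open face $f$ except at their endpoints, and their ends lie in the specified corners.
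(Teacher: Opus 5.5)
Your construction is in substance the paper's. You add a star centered at $v_1$, drawn inside the common face $f$, with the new edges entering $v_1$ in a fixed order so that the rotation systems of $\mathcal{B}'$ and $\mathcal{R}'$ agree. Your explicit choice of corners $\kappa_i$ is, if anything, more careful than the paper's ``arbitrary vertex along the face''.

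However, the step that carries the topological content is misjustified. You need to draw the auxiliary star at $o$ with a \emph{prescribed} cyclic order $\kappa_1,\dots,\kappa_c$ in both embeddings. You justify this by claiming that $\mathcal{B}(f)$ minus finitely many arcs emanating from a point stays connected. That claim is false for arcs ending on the boundary of a face. If a face is an open disk, two arcs from an interior point to boundary points already form a cross-cut that separates it. For several endpoints on one boundary circle, the cyclic order at $o$ is then forced by their order along that circle. The order at $v_1$ must come out the same in $\mathcal{B}'$ and $\mathcal{R}'$, whose faces $f$ may be topologically different, so the freedom to prescribe the order is exactly what the proof needs.

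The missing ingredient is the observation the paper's proof is built on. The corners $\kappa_i$ lie on facial walks of \emph{distinct} components of $G$, hence on distinct boundary circles of $f$ (viewing $f$ as the interior of the surface obtained by cutting $\Sigma$ along the embedded graph). Draw $o\kappa_1,\dots,o\kappa_c$ one at a time:
\begin{itemize}
\item the first arc is a slit from an interior point;
\item each later arc joins the boundary component now containing $o$ to the untouched boundary circle through $\kappa_i$.
\end{itemize}
An arc between two different boundary components never separates. So the current region stays connected, every sector at $o$ lies in the same component as $\kappa_i$, and any prescribed cyclic order is realizable.

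With this added, your argument is complete. Alternatively, as in the paper, you can insert $v_1v_2,\dots,v_1v_c$ one at a time directly into prescribed positions inside the corner $\kappa_1$, using the same non-separation fact. That makes the auxiliary point $o$ and your orientation discussion unnecessary.

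A minor point: for planarity of $G'$ you appeal to a plane embedding realizing the common rotation system, which the lemma does not assume. It suffices that joining the components of a planar graph by a star keeps it planar, since any vertex can be placed on the outer face.
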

\begin{proof}
    Let $G_1,\ldots,G_h$ denote the connected components of $G$, for $h\geq 2$, and let $v_i\in V(G_i)$ be an arbitrary vertex along the face $f$ for $i=1,\dots,h$.
    We augment $G$ with a star centered at $v_1$ on the vertices $\{v_1,\ldots , v_h\}$.
    We insert the new edges in the embeddings $\mathcal{B}$ and $\mathcal{R}$ as follows.
    Consider the embedding $\mathcal{B}$ (the case of $\mathcal{R}$ is analogous): For $i=2,\ldots, h$,
    embed the edge $v_1v_i$ in the face $\mathcal{B}(f)$ such that the edges $v_1v_2,\ldots, v_1v_h$ are in counterclockwise cyclic order around $v_1$.
    In each iteration, the new edge curve $\mathcal{B}(v_1v_i)$ connects two components of the boundary of $\mathcal{B}(f)$.
    In particular, it does not split $\mathcal{B}(f)$ into two faces, and the vertices $v_1,\dots, v_h$ remain on the boundary of $\mathcal{B}(f)$.
\end{proof}

Finally, if $G$ is connected and both $\mathcal{B}$ and $\mathcal{R}$ are in some small neighborhood of a spanning tree, then we can reconfigure $\mathcal{B}$ to $\mathcal{R}$ using \Cref{lem:reduction-to-tree}, by reconfiguring \enquote{bundles} of edges that are parallel to an edge of a tree.
Formally, the embedding $\mathcal{B}$ will remain in the $\varepsilon$-strip system of $\mathcal{B}(T)$ while we apply a reconfiguration sequence to $\mathcal{B}(T)$.

\begin{proof}[Proof of \Cref{thm:embedding-preserving}]
    By \Cref{lem:augmentation}, we may assume that $G$ is connected, has a spanning tree $T$, and $\mathcal{B}$ and $\mathcal{R}$ lie in the $\varepsilon$-strip systems of $\mathcal{B}(T)$ and $\mathcal{R}(T)$, resp., for a sufficiently small $\varepsilon>0$.

    In particular, $\mathcal{B}(T)$ and $\mathcal{R}(T)$ have the same rotation system. \Cref{lem:reduction-to-tree} gives a reconfiguration sequence
    \[
        (\mathcal{B}(T)=\mathcal{F}_0,\mathcal{F}_1,\ldots , \mathcal{F}_k=\mathcal{R}(T)),
    \]
    such that the embeddings $\mathcal{F}_0,\ldots , \mathcal{F}_k$ have the same rotation system. Observe that the proof of \Cref{lem:reduction-to-tree} (\Cref{sec:torus}) used three operations to construct this reconfiguration sequence:
    \begin{itemize}
        \item subdivide an edge with a new vertex;
        \item suppress a vertex of degree 2;
        \item replace an edge curve $\gamma$ with an edge curve $\gamma'$, with the same endpoints, where $\gamma$ and $\gamma'$ intersect only at their endpoints.
    \end{itemize}
    We follow this sequence and create a sequence of $\Sigma$-embeddings of $G$ in the $\varepsilon$-strip system of $\mathcal{F}_i$.
    Furthermore, we maintain the property that if $v$ is a subdivision vertex created during the algorithm, incident to edges $vu,vw\in E(T)$,
    then for every edge $e\in E\setminus E(T)$, every connected component of $D_u\cap \mathcal{F}_i(e)$ connects $A_{v,u}$ and $A_{v,w}$.
    Assume that $\varepsilon$ is sufficiently small such that $\mathcal{F}_i$ admits an $\varepsilon$-strip system for all $i=0,1,\ldots ,k$.

    Consider the $i$-th step of the reconfiguration sequence of $T$, from $\mathcal{F}_{i-1}$ to $\mathcal{F}_i$.
    If an edge $uw\in E(T)$ is subdivided
    with a vertex $v$, we can easily update the $\varepsilon$-strip system: create a disk $D_v$ and replace the strip $S_{uw}$ with $S_{uv}$ and $S_{vw}$.
    The embedding of $G$ does not change.
    Similarly, if a vertex $v$ of degree $2$ is suppressed, the above additional property guarantees that the embedding of $G$ remains the same.

    Assume that an edge curve $\gamma=\mathcal{F}_{i-1}(e)$ is replaced by $\gamma'=\mathcal{F}_i(e)$ for some edge $e=uv\in E(T)$.
    Recall that $\mathcal{F}_{i-1}$ and  $\mathcal{F}_i$ have the same rotation system, so $\gamma$ and $\gamma'$ have the same cyclic order with respect to all other edges incident to $u$ and $v$.
    However, the cyclic order between $\gamma$ and $\gamma'$ is either the same or reversed at $u$ and $v$; see \Cref{fig:rotations1}.
    Let $S_{u,v}$ and $S'_{u,v}$ be the strips corresponding to $\gamma$ and $\gamma'$, resp., in the strip system.

    \begin{figure}[htbp]
        \centering
        \includegraphics[width=\textwidth]{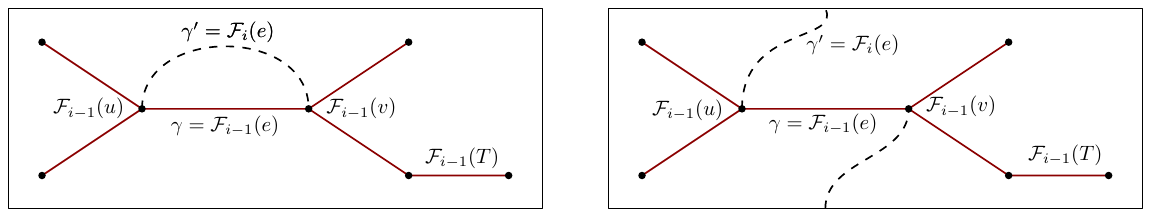}
        \caption{The edge curves $\gamma$ and $\gamma'$ may be in the same (right) or reverse (left) cyclic order at $u$ and $v$.}
        \label{fig:rotations1}
    \end{figure}

    Let $S_{u,v}$ and $S'_{u,v}$ be the strips corresponding to $\gamma$ and $\gamma'$, resp., in the strip system.
    Assume w.l.o.g.\ that $(\gamma,\gamma')$ is in counterclockwise order on $\partial D_u$.
    Let $\gamma_1,\ldots, \gamma_m$ be the edge curves that traverse $S_{u,v}$, in clockwise order w.r.t.\ the arc $A_{u,v}\subset \partial D_u$.
    We can successively reroute the portions of these edge curves in $D_u\cup S_{u,v}\cup D_v$, in this order, so that they pass through $S'_{u,v}$; see \Cref{fig:rotations2}(left).

    \begin{figure}[htbp]
        \centering
        \includegraphics[width=\textwidth]{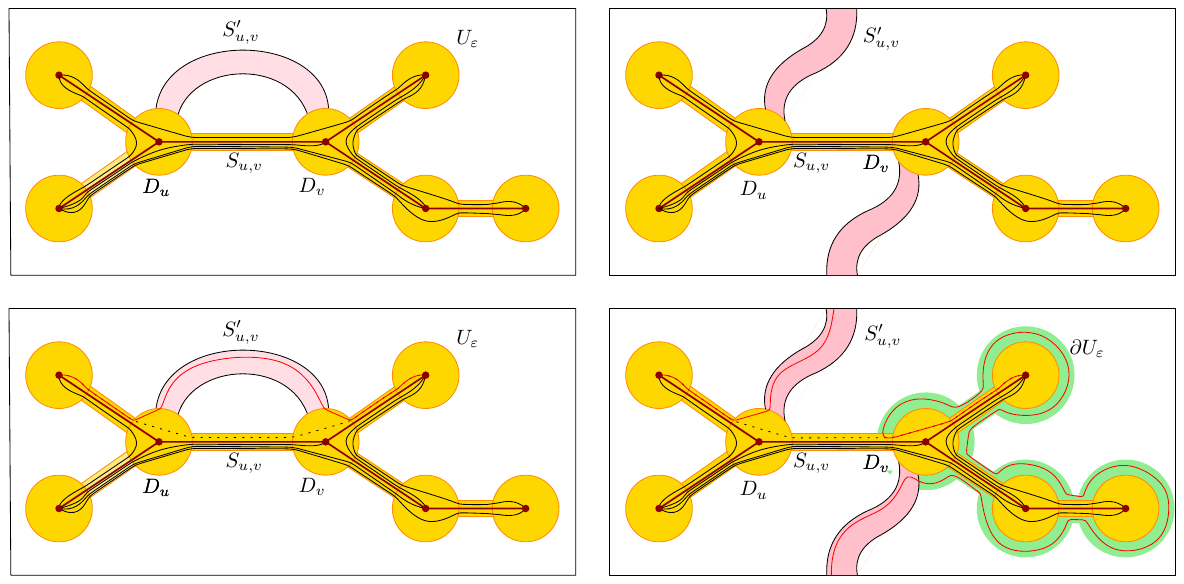}
        \caption{Rerouting the edge curves from the strip $S_{u,v}$ to the strip $S'_{u,v}$.}
        \label{fig:rotations2}
    \end{figure}

    If $\gamma$ and $\gamma'$ have reverse rotations at $u$ and $v$, then
    $\gamma_1,\ldots ,\gamma_m$ cross the arc $A_{v,u}\subset \partial D_v$ in counterclockwise order.
    We can successively reroute $\gamma_1,\ldots ,\gamma_m$ in $D_u\cup S_{u,v}\cup D_v$.

    If $\gamma$ and $\gamma'$ have the same rotations at $u$ and $v$, then
    $\gamma_1,\ldots ,\gamma_m$ reach the arc $A_{v,u}\subset \partial D_v$ in clockwise order.
    For $j=1,\ldots ,m$, we reroute $\gamma_j$ to a curve $\gamma_j'$ that goes through the strip $S'_{u,v}$ to $\partial D_v$, and then follows the boundary $\partial U_\varepsilon$ counterclockwise until it reaches $\partial D_v$ again.
    After traversing the strip $S'_{u,v}$, the edge curves $\gamma_1',\ldots, \gamma_m'$ reach $\partial D_v$ in counterclockwise order.
    A detour around $\partial U_\varepsilon$ reverses their order, and $\gamma_1',\ldots, \gamma_m'$ ultimately reach $\partial D_v$ in clockwise order, as required.
    This concludes the description of the rerouting step.

    For each reconfiguration step of $T$, from $\mathcal{F}_{i-1}$ to $\mathcal{F}_i$, we can compute a reconfiguration sequence of $G$ from a $\varepsilon$-strip system of $\mathcal{F}_{i-1}$ to that of $\mathcal{F}_i$.
    In intermediate steps, we maintain an embedding of $G$ into an $\varepsilon$-strip system of $\mathcal{F}_{i-1}\cup \mathcal{F}_i$.
    Since the curves $\gamma=\mathcal{F}_{i-1}(e)$ and $\gamma'=\mathcal{F}_i(e)$ do not cross each other, this is a $\Sigma$-embedding of $G$.
    Overall, our algorithm computes a reconfiguration sequence of $G$ from $\mathcal{B}$ to $\mathcal{R}$.
\end{proof}


\section{Omitted Material from \cref{ssec:series-parallel}}
\label{app:series-parallel}

\thmseriesparallel*

\begin{proof}
    \label{Pseriesparallel}
    We describe the reconfiguration for $g=1$.
    The case $g>1$ can be reduced to $g=1$ by ignoring all but two sides of $\partial\bar\Sigma$, which we call the \emph{horizontal} and \emph{vertical part} of the boundary $\partial\bar\Sigma$.
    Note that the names correspond to their actual orientation in the case $g=1$.

    If for each pair of poles, the sorting of the parallel subgraphs is identical in $\mathcal{B}$ and $\mathcal{R}$, then $\mathcal{B}$ and $\mathcal{R}$ have the same rotation system,
    and the reconfiguration is possible by \cref{thm:embedding-preserving}.

    Otherwise, let $s$ and $t$ be a pair of poles such that their parallel subgraphs are not sorted the same in $\mathcal{B}$ and $\mathcal{R}$.
    More precisely, let $G_1,\ldots, G_k$ denote the set of maximal subgraphs separated by the separation pair $\{s,t\}$ in $G$.
    Assume that the edges of $G_1,\ldots, G_k$ incident to $s$ and $t$ occur in this order clockwise around $s$ and in this order counterclockwise around $t$ in $\cal B$, respectively.
    In particular, also observe that $G_i$ may have several edges incident to $s$ or $t$, but those occur consecutively in the list of edges sorted according to the clockwise cyclic order around $s$.
    Note that this is necessarily the case for a plane embedding (up to renaming of the components).
    Moreover, let $G_i$ and $G_{i+1}$ be such that their order is interchanged in $\cal R$, i.e., in $\cal R$, the edges of $G_{i+1}$ precede those of $G_i$ in a clockwise (resp., counterclockwise) walk around $\cal B$ (resp., $\cal R$).

    Starting from $\mathcal{B}$, we now first use \cref{thm:embedding-preserving} to make the face between $G_i$ and $G_{i+1}$ the face containing the boundary $\partial\bar{\Sigma}$.
    This yields an embedding $\mathcal{B}'$; see also the first subfigure in \cref{fig:seriesparallel}.
    Let now $(s_1,\ldots,s_k)$ denote the edges incident to $s$ in $G_{i}$ sorted such that $s_x$ is on the outer face (containing the boundary $\partial\bar{\Sigma}$) if all of $s_1,\ldots,s_{x-1}$ get deleted from the embedding $\mathcal{B}'$.
    We reroute $s_1,\ldots,s_k$ in order of increasing index such that we split the currently processed edge $s_x$ at an interior point of its edge curve $\mathcal{B}'(s_x)$ and reroute both ends to the horizontal part of the boundary $\partial\bar\Sigma$; see also second subfigure in \cref{fig:seriesparallel}.
    Note that after these reroutings the vertical part of the boundary $\partial\bar{\Sigma}$ is entirely accessible both from the outside boundary of $G_{i+1}$ but also from the inside boundary of $G_i$ (w.r.t.\ to $\mathcal{B}'$).

    Next, we reroute the edges incident to $s$ and $t$ from component $G_{i+1}$.
    To this end let $(s'_1,\ldots,s'_j)$ and $(t'_1,\ldots,t'_\ell)$ denote the edges incident to $s$ and $t$ respectively in $\mathcal{B}'$ sorted such that $s'_x$ or $t'_x$ is on the face containing the boundary $\partial\bar\Sigma$ if all of $s'_1,\ldots,s'_{x-1}$ or $t'_1,\ldots,t'_{x-1}$, respectively, get deleted from the embedding $\mathcal{B}'$.
    We now first reroute all of $s'_1$ to $s'_j$ in increasing order of their indices.
    As with the $s_x$ in the previous step, we cut $\mathcal{B}'(s'_x)$.
    Then, we redraw the end incident to $s$ so as to occur in the cyclic order around $s$ in between $G_i$ and $G_{i-1}$ and from there to touch the vertical boundary, whereas we reroute the other end to the vertical boundary as well.
    Afterward, we deal analogously with $t'_1$ to $t'_\ell$.
    As a result, we have now changed the cyclic order of $G_i$ and $G_{i+1}$ around $s$ and $t$ successfully; see also the third subfigure in \cref{fig:seriesparallel}.
    Moreover, we observe that there is now a plane path $\pi_s$ starting at $s$, traversing $G_{i+1}$'s boundary to $t$ and then the boundary of $G_i$ until reaching the vertex of $s_1$ other than $s$.
    More generally, if we remove the edge curves of $s_1, \ldots, s_{x-1}$, then $\pi_s$ can be extended so as to contain the vertex of $s_x$ other than $s$.

    We use this property to next reroute $s_1$ to $s_k$ again, once more in increasing order of the indices.
    In each iteration, we reroute $s_x$ closely following $\pi_s$, which makes $s_x$ avoid $\partial \bar{\Sigma}$ again; see the fourth subfigure in \cref{fig:seriesparallel}.
    After performing all of these reroutings, there is a plane path $\pi_{s'}$ between $s$ and the vertex of $s'_1$ other than $s$.
    More generally, if we remove the edge curves of $s'_1,\ldots,s'_{x-1}$, then $\pi_{s'}$ can be extended so as to contain the vertex of $s'_x$ other than $s$.
    Similarly, there is a plane path $\pi_{t'}$ between $t$ and the vertex of $t'_1$ other than $t$.
    Again, if we remove the edge curves of $t'_1,\ldots,t'_{x-1}$, then $\pi_{t'}$ can be extended so as to contain the vertex of $t'_x$ other than $t$.

    Thus, we can reroute $s'_1$ to $s'_j$ and $t'_1$ to $t'_\ell$ in increasing order of their indices, closely tracing $\pi_{s'}$ and $\pi_{t'}$, respectively, similar to the previous step where we rerouted $s_1$ to $s_k$; see the fifth subfigure in \cref{fig:seriesparallel}.
    Since $\pi_{s'}$ and $\pi_{t'}$ are plane, we now obtain once more a plane embedding where $G_{i+1}$ precedes $G_i$ in clockwise order around $s$ (in counterclockwise order around $t$, respectively).

    Repeated application of the subroutine that swaps the order of $G_i$ and $G_{i+1}$ can be used to sort the parallel components at a single pair of poles $s$ and $t$ (e.g., mimicking the bubble sort algorithm).
    As other pairs of poles can be treated similarly, the proof then follows by application of \cref{thm:embedding-preserving}.
\end{proof}

\section{Omitted Material from \cref{sec:projective}}
\label{app:projectivePlain}

\claimprojectiveplane*

\begin{claimproof}
    \label{Pclaimprojectiveplane}
    Recall that $N_i$ is the thickening of $\partial D\cup M_i\cup \left(\bigcup_{j< i}M_j\right)$; arcs $\gamma_1,\ldots ,\gamma_t$ intersect $M_i\setminus \{s_i\}$ (ordered from $t_i$ to $s_i$); and $P(\gamma_j)$ is the curve in $\mathcal{B}$ that contains the arc $\gamma_j$ for $j=1,\ldots , t$. Let $k$ be the smallest integer such that $P(\gamma_k)=P(\gamma_j)$ for some $j<k$.
    In particular, this means that $P(\gamma_k)$ intersects $M_i\setminus \{s_i\}$ at least twice.

    The two intersection points, $\gamma_j\cap (M_i\setminus \{s_i\})$ and $\gamma_k\cap (M_i\setminus \{s_i\})$, are connected by two arcs: along $M_i$ and along $P(\gamma_j)=P(\gamma_k)$; the union of these two arcs is a Jordan curve that we denote by $C$.
    Refer to \cref{fig:simplify-a}.

    \begin{figure}[htbp]
        \centering
        \includegraphics[width=0.6\textwidth]{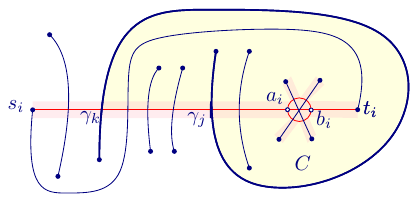}
        \caption{$P(\gamma_j)=P(\gamma_k)$ crosses $M_i$ twice.}
        \label{fig:simplify-a}
    \end{figure}

    Note that the closed curve $C$ cannot separate $s_i$ from $t_i$.
    Indeed, suppose for contradiction that $s_i$ and $t_i$ are on opposite sides of $C$ (as in  \cref{fig:simplify-a}). Since the curve $P_i$ connects $s_i$ and $t_i$, then $P_i$ must cross $C$.
    Since the curves in $\mathcal{B}$ are disjoint, then $P_i$ must cross $C\cap M_i$, and so $P_i$ intersects $M_i\setminus \{s_i\}$ twice.
    This contradicts the minimality of $k$.
    In the remainder of the proof, we may assume that $C$ does not separate $s_i$ and $t_i$.
    We distinguish between two cases.

    \subparagraph{Case~1: Every curve $\bm{P(\gamma_\ell)}$, $\bm{j<\ell<k}$, crosses $\bm{M_i\setminus \{s_i\}}$ only once}
    We can redraw $P(\gamma_j)=P(\gamma_k)$, by replacing its subcurve between (and including) $\gamma_j$ and $\gamma_k$, such that the new arc does not cross $a_i s_i$.
    The replacement arc closely follows $a_i s_i$ in the interior of $C$, and makes a detour along each arc $P(\gamma_\ell)$, $j<\ell<k$, in the interior of $C$.
    See \cref{fig:simplify-b} (specifically, \cref{fig:simplify-b}~(top) shows an example for $P(\gamma_k)\neq P_i$, and \cref{fig:simplify-b}~(bottom) for $P(\gamma_k)=P_i$).

    \begin{figure}[htbp]
        \centering
        \includegraphics[width=\textwidth]{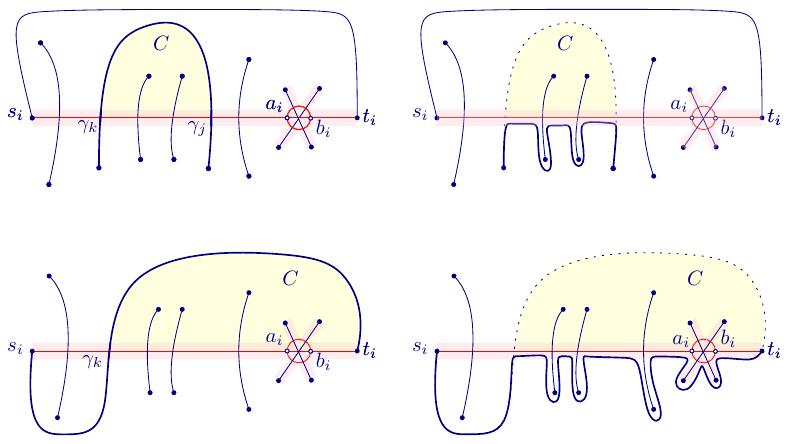}
        \caption{Every curve $P(\gamma_\ell)$, $j<\ell<k$, crosses $M_i\setminus \{s_i\}$ only once.}
        \label{fig:simplify-b}
    \end{figure}

    \subparagraph{Case~2: Some curve $\bm{P(\gamma_h)}$, $\bm{j<h<k}$, crosses $M_i\setminus \{s_i\}$ more than once}
    We shall use the crosscap, and modify the curves $\{P(\gamma_1),\ldots , P(\gamma_{k-1})\}$ in five stages; refer to \cref{fig:simplify-c}.
    Orient $M_i$ such that $C$ is on its \emph{left} side; and then orient the curves in $\{P(\gamma_1),\ldots , P(\gamma_{k-1})\}$ such that they each start from the right side of $M_i$, and cross to the left side of $M_i$.

    \begin{figure}[t]
        \centering
        \includegraphics[width=\textwidth]{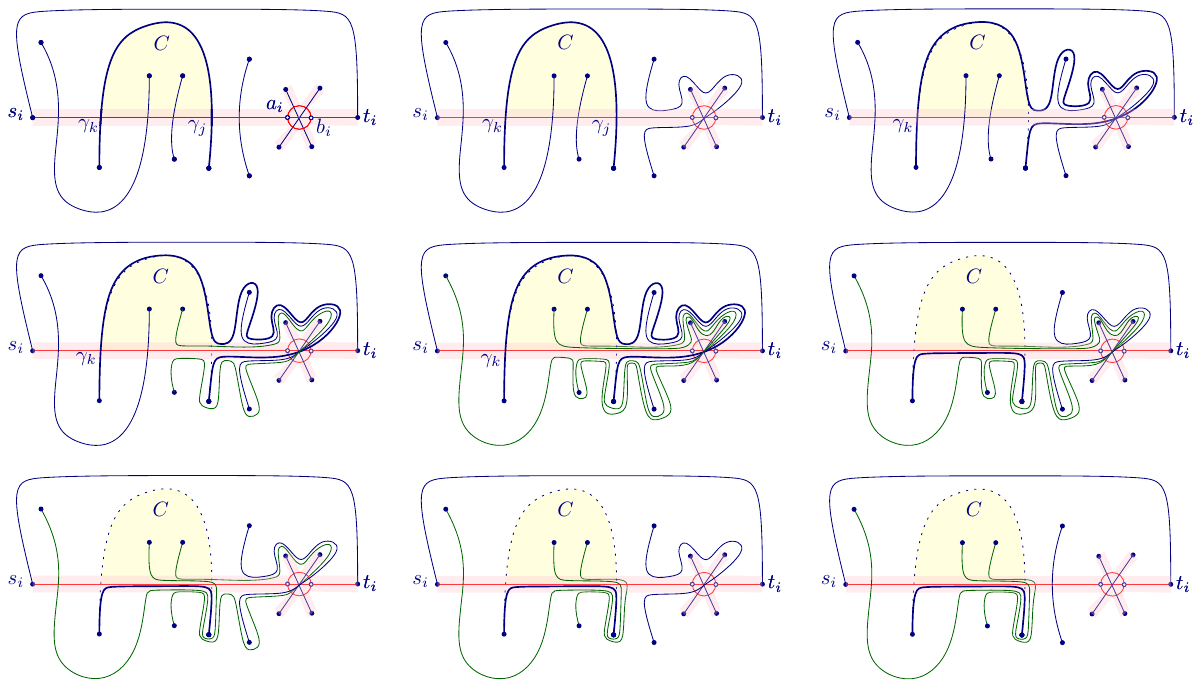}
        \caption{Curve $P(\gamma_h)$, $j<h<k$, crosses $M_i\setminus \{s_i\}$ more than once.}
        \label{fig:simplify-c}
    \end{figure}

    \begin{itemize}
        \item Step~1: For $\ell=1, \ldots, j$, modify $P(\gamma_\ell)$ to replace the arc $\gamma_\ell$ with a new arc $\gamma'_\ell$ that closely follows $M_i$ to $\partial D_i$, crosses the crosscap, and then closely follows the curves incident to $\partial D_i$.
        \item Step~2: For $\ell=j+1,\ldots k-1$, replace $\gamma_\ell$ with an arc $\gamma'_\ell$ that closely follows the union of $M_i$ and the arcs redrawn in Stage~1, crosses the crosscap, and then closely follows $M_i$.
        \item Step~3: Replace $P(\gamma_j)=P(\gamma_k)$ such that its arc between $M_i\cap \gamma_j$ and $M_i\cap \gamma_k$ is replaced by a new arc that closely follows $M_i$.
              As a result  $P(\gamma_j)=P(\gamma_k)$ has two fewer crossings with $M_i$.
        \item Step~4: For $\ell=k-1,\ldots, j+1$, replace $\gamma'_\ell$ with a new arc $\gamma''_\ell$ that closely follows $P(\gamma_j)=P(\gamma_k)$, and crosses $M_i$ exactly once.
        \item Step~5: For $\ell=j-1,\ldots , 1$, replace $\gamma'_\ell$ with the original arc $\gamma_\ell$ (which crosses $M_i$ exactly once).
    \end{itemize}
    The number of crossings between $\{P_i,\ldots ,P_n\}$ and $M_i\setminus \{s_i\}$ may increase in intermediate steps.
    However, at the end of the process, $P(\gamma_j)=P(\gamma_k)$ has two fewer intersections with $M_i\setminus \{s_i\}$, and each of the other curves has the same number of intersections as before.
\end{claimproof}

\section{Omitted Material from \cref{sec:counter}}
\label{app:negative}

\thmnegative*
\label{Pnegative}

\begin{proof}
    For the plane (and the sphere), Ito et al.~\cite{ItoIK0MNOO25} showed that a perfect matching on 4 vertices $G_0$ has two embeddings, $\mathcal{B}_0$ and $\mathcal{R}_0$, that are not reconfigurable; see also \cref{fig:intro:1}.

    Let $\Delta (abc)$ be a triangle in the plane that contains both $\mathcal{B}_0$ and $\mathcal{R}_0$.
    We construct a graph $G_1$ on 7 vertices as a disjoint union of a 3-cycle $abc$, a perfect matching on four vertices, and 5 additional edges, as shown in \cref{fig:impossible}.
    We also construct two embeddings of $G_1$, denoted $\mathcal{B}_1$ and $\mathcal{R}_1$, that extend $\mathcal{B}_0$ and $\mathcal{R}_0$, respectively; see \cref{fig:impossible}.
    Note that $\mathcal{B}_1$ and $\mathcal{R}_1$ have the same rotation system in the plane.

    \begin{figure}[htbp]
        \centering
        \includegraphics[width=0.8\textwidth]{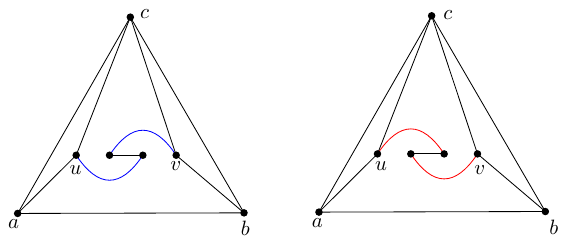}
        \caption{Two embeddings of the graph $G_1$ in the plane.}
        \label{fig:impossible}
    \end{figure}

    Let $\Sigma$ be a surface.
    We construct a finite simple graph $H=H(\Sigma)$ that is uniquely embeddable
    in $\Sigma$ (that is, any two $\Sigma$-embeddings differ only by a homeomorphism of $\Sigma$ and a graph automorphism)
    and whose $3$-cycles are facial in every $\Sigma$-embedding.

    We first choose a finite \emph{simplicial triangulation} $K$ of $\Sigma$, that is, an embedded triangulated graph of $\Sigma$ where every face is homeomorphic to a disk.
    Such a triangulation exists for every compact surface; see, e.g., \cite[Section~3.1]{MT2001}.
    Next, viewing $K$ as its embedded $1$-skeleton, we subdivide each edge once, obtaining a hexagonalization $K'$ whose faces are induced $6$-cycles.
    To complete the construction of $H$, we add a new \emph{face vertex} $v_f$ in the interior of each face $f$ of $K'$ and connect it crossing-free to all vertices on $f$. Observe that $H$ is a triangulation.

    By Negami's Theorem~3.3, a triangulation of a closed surface is uniquely embeddable
    if it has no core skew vertex~\cite[Theorem~3.3, pp.~75--76]{Negami1985}.
    Negami observes that if a triangulation contains a skew vertex,
    then there is a triangle that does not bound a disk in $\Sigma$.
    It therefore suffices to show that every $3$-cycle of $H$ bounds a triangular face.
    To this end, observe that since $K'$ was obtained by subdividing all edges of the triangulation $K$, it does not contain any $3$-cycle.
    Hence, each $3$-cycle contains a face vertex, and since no two face vertices are adjacent, it contains exactly one.
    Write such a cycle as $v_f xy$.
    Both $x$ and $y$ lie on the boundary of $f$.
    Since the boundary of $f$ is an induced $6$-cycle, $xy$ is a boundary edge of $f$, so $v_f xy$ bounds one of the triangular faces added inside $f$.

    Consequently, every $3$-cycle of $H$ bounds a triangular face.
    Hence $H$ has no skew vertex and thus no core skew vertex, so Negami's theorem implies that $H$ is uniquely embeddable in $\Sigma$.
    Since graph automorphisms map $3$-cycles to $3$-cycles, unique embeddability then implies that every $3$-cycle of $H$ is facial in every $\Sigma$-embedding of $H$.

    \begin{claimproof}[Proof of Statement~\ref{thm:negative:1}.] 
        Now, we can construct the graph $H_1=H_1(\Sigma)$ by combining $H$ and $G_1$, namely, by identifying an arbitrary facial cycle of $H$ with the 3-cycle $abc$ in $G_1$.
        We obtain two $\Sigma$-embeddings of $H_1$, denoted $\mathcal{B}$ and $\mathcal{R}$, by respectively inserting the two embeddings of $G_1$ shown in \cref{fig:impossible} into the corresponding face of the chosen embedding of $H$.
        Clearly, $\mathcal{B}$ and $\mathcal{R}$ have the same rotation system on $\Sigma$.

        Suppose that there is a reconfiguration sequence $(\mathcal{E}_0,\ldots ,\mathcal{E}_k)$ such that $\mathcal{E}_0=\mathcal{B}$ and $\mathcal{E}_k=\mathcal{R}$.
        The restriction of each intermediate $\Sigma$-embedding of $H_1$ to $H$ is a $\Sigma$-embedding of $H$.
        By the property established above, the $3$-cycle $abc$ bounds a triangular face $F^i_{abc}$ in the restriction of $\mathcal E_i$ to $H$.
        Since $u$ and $v$ have three independent paths to $a$, $b$, and $c$, the vertices $u$ and $v$ lie in the interior of the face $F^i_{abc}$.
        Consequently, all vertices of $V(G_1)\setminus \{a,b,c\}$ lie in the interior of $F^i_{abc}$.

        We simulate the reconfiguration sequence $(\mathcal{E}_0,\ldots ,\mathcal{E}_k)$ on $\Sigma$ with a reconfiguration sequence $(\mathcal{M}_0,\ldots ,\mathcal{M}_k)$ from $\mathcal{B}_1$ to $\mathcal{R}_1$ in the plane such that the embedding of the 3-cycle $abc$ remains a regular triangle $\Delta (abc)$ at all times.
        We also maintain a homeomorphism $\varphi_i:\mathrm{cl}(F^i_{abc})\to\Delta(abc)$ and the property that for every edge $e\in E(G_1)$, we have $\mathcal{M}_i(e)=\varphi_i\circ \mathcal{E}_i(e)$.
        We simulate a move from $\mathcal{E}_i$ to $\mathcal{E}_{i+1}$ as follows.
        If the move redraws an edge in $E(H_1)\setminus E(G_1)$, we do nothing, that is, $\mathcal{M}_{i+1}=\mathcal{M}_i$ and $\varphi_{i+1}=\varphi_i$.
        If the move redraws an edge $e$ of the 3-cycle $abc$, then the face $F^i_{abc}$ changes, and we update our homeomorphism $\varphi_i$ piecewise as follows; see \Cref{fig:homomorphism}~(left--middle): Let $f_i$ be the face of $\mathcal{E}_i(H_1)$ adjacent to $e$ such that $f_i\subset F^i_{abc}$, and let $f_{i+1}$ be the corresponding face of $\mathcal{E}_{i+1}(H_1)$.
        As both $f_i$ and $f_{i+1}$ are homeomorphic to a disk, there is a homeomorphism $\lambda:\mathrm{cl}(f_i)\to \mathrm{cl}(f_{i+1})$ that fixes every point in $\partial f_i\cap \partial f_{i+1}$. For all points $p\in F^i_{abc}\setminus f_i$, let $\varphi_{i+1}(p)=\varphi_i(p)$; and for all $p\in f_{i+1}$, let $\varphi_{i+1}(p)=\varphi_i(\lambda^{-1}(p))$. We also set $\mathcal{M}_{i+1}=\mathcal{M}_i$.
        Finally, if the move changes any interior edge $e$ of $G_1$ (as in \Cref{fig:homomorphism}~(middle--right)), then the homeomorphism does not change (that is, $\varphi_{i+1}=\varphi_i$), and we replace $\mathcal{M}_i(e)=\varphi_i\circ \mathcal{E}_i(e)$ with $\mathcal{M}_{i+1}(e)=\varphi_{i+1}\circ \mathcal{E}_{i+1}(e)$.

        \begin{figure}[htbp]
        \centering
        \includegraphics[width=0.95\textwidth]{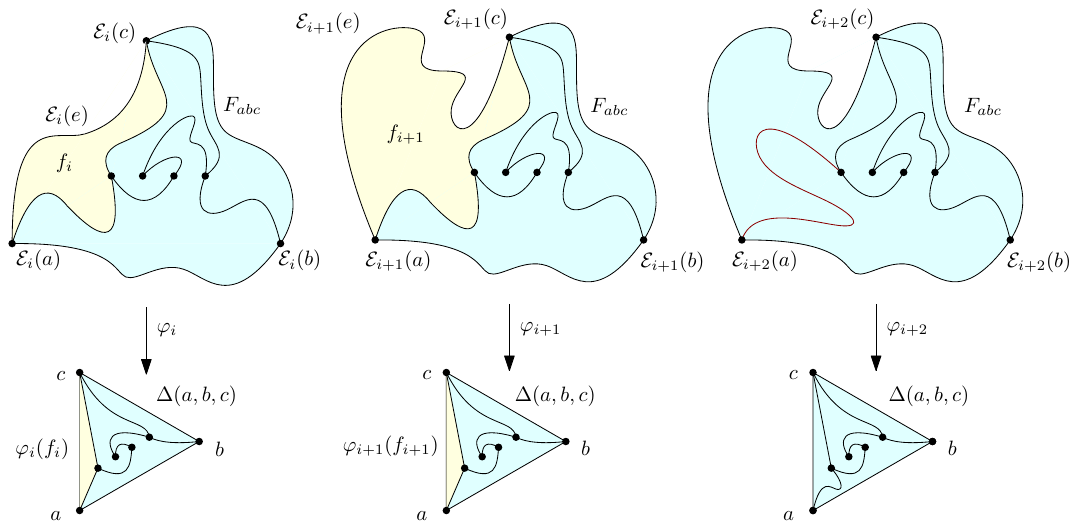}
        \caption{Embeddings $\mathcal{E}_i,\mathcal{E}_{i+1},\mathcal{E}_{i+2}$ in a reconfiguration sequence, and the corresponding homeomorphisms $\varphi_i,\varphi_{i+1},\varphi_{i+2}$.}
        \label{fig:homomorphism}
    \end{figure}

        The reconfiguration sequence $(\mathcal{M}_0,\ldots ,\mathcal{M}_k)$, restricted to the perfect matching $G_0$, yields a reconfiguration sequence from $\mathcal{B}_0$ to $\mathcal{R}_0$ in the plane.
        This contradicts the result by Ito et al.~\cite{ItoIK0MNOO25}.
        Therefore, $\mathcal{B}$ and $\mathcal{R}$ are also not reconfigurable.
    \end{claimproof}

    \begin{claimproof}[Proof of Statement~\ref{thm:negative:2}.]
        We augment the graph $G_1$ from the proof of Statement~\ref{thm:negative:1}, yielding a graph $G_2$.
        Namely, we add two additional edges $uu'$ and $vv'$ between the endpoints of the two matching edges $uv'$ and $u'v$; see the dashed edges in \cref{fig:negative}.
        As in the previous proof, we construct a graph $ H_2 = H_2(\Sigma)$ by identifying the $3$-cycle $abc$ of $G_2$ with an arbitrary facial cycle of the uniquely embeddable graph $H$.

        We now show that fixing the rotation systems $B_2$ and $R_2$ forces the required embeddings of the gadget $G_2$.
        In every $\Sigma$-embedding of $H$, the $3$-cycle $abc$ bounds a triangular face.
        As in the proof of Statement~\ref{thm:negative:1}, the three independent paths from each of $u$ and $v$ to $a$, $b$, and $c$ force the part of $G_2$ outside the cycle $abc$ to lie within this face, which is homeomorphic to a disk.
        Thus, $G_2$ is necessarily embedded as a plane graph.
        For plane graphs, fixing the rotation system actually determines the planar embedding up to the choice of the outer face.
        However, the outer face is prescribed by the fact that we identified $abc$ with a facial cycle of $H$.
        Hence, if we prescribe two rotation systems as shown in \cref{fig:negative}, then two different embeddings for $G_2$ must be used in any $\Sigma$-embedding of $H_2$ implementing $B_2$ and $R_2$, respectively.
        Applying the simulation from the proof of Statement~\ref{thm:negative:1} to a hypothetical reconfiguration sequence between any such pair would yield a plane reconfiguration sequence whose restriction to the edges $uv'$ and $u'v$ transforms $\mathcal B_0$ into $\mathcal R_0$.
        This contradicts the result of Ito et al.~\cite{ItoIK0MNOO25}.
    \end{claimproof}
    We have shown both statements of the theorem, which concludes the proof.
\end{proof}

\end{document}